\newcommand{\bfsym}[1]{\ensuremath{\boldsymbol{#1}}}
\def\balpha{\bfsym \alpha}
\def\bbeta{\bfsym \beta}
\def\bxi{\bfsym \xi}
\def\bgamma{\bfsym \gamma}
\def\bSigma{\bfsym \Sigma}
     \def\bA{\bfsym A}
\def\bb{\bfsym b}     \def\bB{\bfsym B}
\def\bc{\bfsym c}     
\def\be{\bfsym e}     
\def\bff{\bfsym f}    \def\bF{\bfsym F}
     \def\bH{\bfsym H}
     \def\bI{\bfsym I}
     \def\bM{\bfsym M}
     \def\bP{\bfsym P}
     \def\bS{\bfsym S}
\def\bu{\bfsym u}     \def\bU{\bfsym U}
\def\bv{\bfsym v}     \def\bV{\bfsym V}
\def\bw{\bfsym w}     
\def\bx{\bfsym x}     \def\bX{\bfsym X}
     \def\bY{\bfsym Y}
     \def\bZ{\bfsym Z}
\def\cA{{\cal  A}}
\def\cC{{\cal  C}}
\def\cD{{\cal  D}}
\def\cE{{\cal  E}}
\def\cI{{\cal  I}}
\def\cK{{\cal  K}}
\def\cS{{\cal  S}}
\def\cV{{\cal  V}}
\def \bbP {\mathbb{P}}
\def \bbE {\mathbb{E}}
\def \bbF {\mathbb{F}}
\def \bbR {\mathbb{R}}
\renewcommand{\hat}{\widehat}
\renewcommand{\tilde}{\widetilde}
\def\hbbeta{\hat{\bbeta}}
\def\hbgamma{\hat{\bgamma}}
\def \bvarphi      {\bfsym {\varphi}}
\def \bPhi      {\bfsym {\Phi}}
\def \bomega    {\bfsym {\omega}}
\def \bphi      {\bfsym {\phi}}
\def \bnu      {\bfsym {\nu}}
\def\bSigma{\bfsym \Sigma}
\def \bdelta    {\bfsym {\delta}}
\def \bDelta    {\bfsym {\Delta}}
\def \bTheta    {\bfsym {\Theta}}
\def \bGamma    {\bfsym {\Gamma}}
\def\blambda    {\bfsym {\gamma}}
\def\bxi        {\bfsym {\xi}}
\DeclareMathOperator{\Var}{Var}
\DeclareMathOperator{\Cov}{Cov}
\def \etal {{\em et al.}}
\newcommand{\tr}[0]{\mathrm{tr}}
\theoremstyle{plain}
\newtheorem{theorem}{Theorem}[section]
\newtheorem{lemma}[theorem]{Lemma}
\newtheorem{proposition}[theorem]{Proposition}
\theoremstyle{definition}
\newtheorem{example}{Example}[section]
\newtheorem{assumption}{Assumption}[section]
\newtheorem{remark}{Remark}
\numberwithin{equation}{section}
\definecolor{DSgray}{cmyk}{0,1,0,0}
\def\singlespace{\def\baselinestretch{1}\@normalsize}
\newcommand{\blind}{1}
\begin{document}
	
	\def\spacingset#1{\renewcommand{\baselinestretch}%
		{#1}\small\normalsize} \spacingset{1}

	\if1\blind
	{
		\title{ Are  Latent Factor Regression and Sparse Regression Adequate?}
		\author{  Jianqing Fan \qquad Zhipeng Lou \qquad  Mengxin Yu  }
		\date{}
		\maketitle
		\begin{singlespace}
			\begin{footnotetext}[1]
				{
					Jianqing Fan is Frederick L. Moore '18 Professor of Finance, Professor of Statistics, and Professor of Operations Research and Financial Engineering at the Princeton University.  Zhipeng Lou is a   Postdoctoral Researcher at Department of Operations Research and Financial Engineering, Princeton University. Mengxin Yu is a Ph.D. student at Department of Operations Research and Financial Engineering, Princeton University, Princeton, NJ 08544, USA.   
					Emails: \texttt{\{jqfan, zlou,  mengxiny\}@princeton.edu}.  The research is supported  in part by the NIH grant 2R01-GM072611-16,  and the NSF grants DMS-1712591, DMS-2052926, DMS-2053832, and the ONR grant N00014-19-1-2120				
				}
			\end{footnotetext}
			
		\end{singlespace}
	}
	\fi
	\if0\blind
	{
		\title{ Are  Latent Factor Regression and Sparse Regression Adequate?}
		\author{}
		\date{}
		\maketitle
	} \fi

\begin{abstract}
We propose the \textbf{F}actor \textbf{A}ugmented sparse linear \textbf{R}egression \textbf{M}odel (FARM) that not only encompasses both the latent factor regression and sparse linear regression as special cases but also bridges dimension reduction and sparse regression together. We provide theoretical guarantees for the estimation of our model under the existence of sub-Gaussian and heavy-tailed noises (with bounded $(1+\vartheta)$-th moment, for all $\vartheta>0$) respectively. In addition, the existing works on supervised learning often assume the latent factor regression or sparse linear regression is the true underlying model without justifying its adequacy. To fill in such an important gap, we also leverage our model as the alternative model to test the sufficiency of the latent factor regression and the sparse linear regression models. To accomplish these goals, we propose the \textbf{F}actor-\textbf{A}djusted de\textbf{B}iased \textbf{Test} (FabTest) and a two-stage ANOVA type test respectively. We also conduct large-scale numerical experiments including both synthetic and FRED macroeconomics data to corroborate the theoretical properties of our methods. Numerical results illustrate the robustness and effectiveness of our model against latent factor regression and sparse linear regression models.

\end{abstract}

\noindent\textbf{Keyword}: Factor model, Factor augmented regression, Latent factor regression, Sparse linear regression
\newpage
\spacingset{1.45} 

\pagestyle{plain}

\section{Introduction}


Over the past two decades, along with the development of technology, datasets with high-dimensionality in various fields such as biology, genomics, neuroscience and finance have been collected. One stylized feature of the high-dimensional data is the high co-linearity across features. A common structure to characterize the dependence across features is the approximate factor model~\citep{Bai2003inferential, Fan2013POET}, in which the variables are correlated with each other through several common latent factors. More specifically, we assume the observed $d$-dimensional covariate vector $\bx$ follows from the model
\begin{align}
\label{eq_factor_model}
    \bx = \bB \bff + \bu,
\end{align}
where $\bff$ is a $K$-dimensional vector of latent factors, $\bB\in \mathbb{R}^{d\times K}$ is the corresponding factor loading matrix, and $\bu$ is a $d$-dimensional vector of idiosyncratic component which is  uncorrelated with $\bff$.


To tackle the high-dimensionality of datasets, various methods have been proposed. Among these, dimensionality reduction and sparse regression are two popularly used ones to circumvent the curse of dimensionality.  They also serve as the backbones for many emerging statistical methods.


In terms of dimension reduction, the factor regression model is one of the most popular methods and has been widely used~\citep{Stock_Watson, Bai2006, Bair_Hastie_2006, Bai2008, FAN_Xue_Yao, Benea2020_2, Benea2020, Bing2021}.  It assumes that the factors drive both dependent and independent variables as follows:
\begin{align}
\label{eq3}
    Y &= \bff^{\top}\bgamma + \varepsilon,\cr
    \bx &= \bB \bff + \bu.
\end{align}
Here $Y$ is the response variable and $\varepsilon \in \bbR$ is the random noise which is independent with the factor $\bff$. When the factors are unobserved, one usually learns the latent factors based on observed $\bx$ and substitutes the sample version into the regression model~\eqref{eq3}. There are several methods for estimating latent factors such as  Principal Component Analysis (PCA)~\citep{Bai2003inferential, Fan2013POET}, maximum likelihood estimation~\citep{Bai2012statistical}, and random projections~\citep{Liao2020}. In particular, when the leading Principal Components are used as an estimator for $\bff$, the sample version of \eqref{eq3} reduces to the classical Principal Component Regression (PCR) \citep{Hotelling1933analysis}.


As for sparse regression, a commonly used model is the following (sparse) linear regression:
\begin{align}
\label{eq4}
    Y = \bx^{\top} \bbeta + \varepsilon.
\end{align}
In the high dimensional regime where the dimension $d$ can be much larger than the sample size $n$, it is commonly assumed that the population parameter vector $\bbeta \in \bbR^{d}$ is sparse. Over the last two decades, various regularized methods, which incorporate this notion of sparsity, have been proposed. See, for instance, LASSO~\citep{LASSO1996}, SCAD~\citep{SCAD2001}, Least Angle Regression~\citep{Efron2004}, Dantzig selector~\citep{CandesTao2007}, Adaptive LASSO~\citep{ALASSO2006}, MCP~\citep{MCP2010} and many others. For more details, please refer to~\cite{Fan_Li_Zhang_Zou_data} for a comprehensive survey.

In this paper, we introduce the \textbf{F}actor \textbf{A}ugmented sparse linear \textbf{R}egression \textbf{M}odel (FARM)~\eqref{factor_aug}, which incorporates both the latent factor and the idiosyncratic component into the covariates,
\begin{align}
\label{factor_aug}
    Y &= \bff^{\top}\bgamma^{\star} + \bu^{\top}\bbeta^{\star} + \varepsilon,\cr
    \bx &= \bB \bff + \bu,
\end{align}
where $\bgamma^{\star} \in \bbR^{K}$ and $\bbeta^{\star} \in \bbR^{d}$ are population parameter vectors quantifying the contribution of the latent factor $\bff$ and the idiosyncratic component $\bu$, respectively. Obviously, the factor regression model~\eqref{eq3} is a special case of~\eqref{eq_factor_linear} in which $\bbeta^{\star} = 0$. To better illustrate the difference between model~\eqref{factor_aug} and the sparse linear model~\eqref{eq4}, our model can be written in an equivalent form,
\begin{align}
\label{eq_model_varphi}
    Y &= \bff^{\top}\bvarphi^{\star} + \bx^{\top} \bbeta^{\star} + \varepsilon,\cr
    \bx &= \bB \bff + \bu,
\end{align}
where $\bvarphi^{\star} = \bgamma^{\star} - \bB^{\top}\bbeta^{\star} \in \bbR^{K}$ quantifies the extra contribution of the latent factor $\bff$ beyond the observed predictor $\bx$. Therefore, FARM expands the space spanned by $\bx$ into useful directions spanned by $\bff$. It is clear that the sparse regression model~\eqref{eq4} is also a special case of~\eqref{factor_aug} with $\bvarphi^{\star} = 0$.  Thus, our model is general enough to bridge the dimensionality reduction and the sparse regression.


The motivation of our factor augmented linear model~\eqref{factor_aug} comes from two perspectives.
\begin{enumerate}
    \item Firstly, it origins from \cite{Fan2020factor}. In order to get precise estimation of $\bbeta^{\star}$ based on highly correlated variables, they study the sparse regression estimation by substituting~\eqref{eq_factor_model} into~\eqref{eq4} and obtain
    \begin{align}
    \label{eq_linear_model}
        Y = (\bB\bff + \bu)^{\top}\bbeta^{\star} + \varepsilon = \bff^{\top}(\bB^{\top}\bbeta^{\star}) + \bu^{\top}\bbeta^{\star} + \varepsilon.
    \end{align}
We observe from~\eqref{eq_linear_model}, when the sparse linear regression is adequate, for a given $\bbeta^{\star},$ the regression coefficient on $\bff$ is fixed at $\bgamma^{\star} = \bB^{\top}\bbeta^{\star}$. However, in reality, especially when the variables are highly correlated, it is very likely that the leading factors possess extra contributions to the response instead of only a fixed portion $\bB^{\top}\bbeta^{\star}$. This results in our proposition of model~\eqref{factor_aug}, where we augment the leading factors into sparse regression that expands the linear space spanned by $\bx$ into useful directions.

\item Secondly, it origins from the factor regression given in~\eqref{eq3}. 
In reality, the leading common factors $\bff$ indeed provides some important contributions to the response, but it is hard to believe that they will have fully explanation power, especially when the effect of the factors is weak. Besides, in real applications, several examples illustrate the poor performance of factor regression model or PCR, see~\cite{Jolliffe1982} for more details. Thus, completely ignoring the idiosyncratic component $\bu$ will harm in model generalization. This also motivates us to propose model~\eqref{factor_aug}, in which we augment the sparse regression by incorporating the idiosyncratic component $\bu$ into the original factor regression.


\end{enumerate}

In this paper, we first study the properties of estimated parameters under the proposed model~\eqref{factor_aug}. Specifically, we assume the factors given in~\eqref{factor_aug} are unobserved and leverage PCA to estimate them. Incorporated with penalized least-squares with the $\ell_{1}$-penaly, we derive the $\ell_{2}$-consistency results for parameter vectors $\bgamma^{\star}$ and $\bbeta^{\star}$. Going beyond the linear regression model and the least squares estimation, our idea can be naturally extended to more general supervised learning models through different loss functons. For instance, quantile regression~\citep{Belloni2011, Fan2014adaptive}, support vector machine~\citep{Zhang2016variable, Peng2016}, Huber regression~\citep{Fan2017estimation, AHR2020}, generalized linear model~\citep{Van2008high, Fan2020factor} and many other variants. In order to demonstrate the general applicability of our proposed methods, in our paper, we further extend our model settings to robust regression. To be more specific, we only assume the existence of $(1 + \vartheta)$-th moment of the noise distribution for some $\vartheta > 0$. We adopt Huber loss together with adaptive tuning parameters and $\ell_{1}$-penalization to derive the consistency results for the parameters of our interest. Besides the aforementioned extensions, it is worth to note that our model is also applicable in the field of causal inference~\citep{Imbens2015,hernan2019causal}. To be more specific, the latent factors $\bff$ given in our model are able to be treated as the unobserved confounding variables which affect both the covariate $\bx$ and the response $Y$. From the causal perspective, we provide a methodology to conduct (robust) statistical estimation as well as inference of our model under the existence of latent confounding variables.

The aforementioned works on factor regression and sparse linear regression mainly investigate  the theoretical properties based on the assumption that either of them is the true underlying model~\citep{Stock_Watson, LASSO1996, SCAD2001, ALASSO2006, Bai2006, MCP2010, FAN_Xue_Yao, Fan2020factor, Bing2021}. 
However, whether a given model is adequate to explain a given dataset plays a crucial role in the model selection step. This  motivates us to fill the gap by leveraging our model as the alternative one to perform hypothesis testing on the adequacy of the factor regression model as well as the sparse linear regression model when covariates admit a  factor structure.


For the hypothesis test on the adequacy of the latent factor regression model, we consider testing the hypotheses
\begin{align}
\label{test_pcr}
    H_{0} : Y = \bff^{\top}\bgamma^{\star} + \varepsilon \enspace \mathrm{versus} \enspace H_{1} : Y = \bff^{\top}\bgamma^{\star} + \bu^{\top} \bbeta^{\star} + \varepsilon.
\end{align}
This amounts to testing $H_0:  \bbeta^{\star}= 0$ under FARM model.
To this end, we propose the \textbf{F}actor-\textbf{A}djusted de\textbf{B}iased \textbf{Test} statistic (FabTest) $\tilde{\bbeta}_{\lambda}$ which serves as a de-sparsify version of the estimator $\hat\bbeta_{\lambda}$ obtained under $\ell_{1}$-regularization. The asymptotic distribution of the proposed test statistic is derived by leveraging the high-dimensional Gaussian approximation. The critical value controlling the Type-I error is estimated based on the  multiplier bootstrap method. As a byproduct, we are also able to conduct entrywise and groupwise hypothesis testing on parameter $\bbeta^{\star}$ by following similar de-biasing procedure.


For validating the adequacy of the sparse linear regression model, we consider testing the hypotheses
\begin{align}
\label{Test_sparse_model}
    H_0 : Y = \bx^{\top} \bbeta^{\star} + \varepsilon \enspace \mathrm{versus} \enspace H_1 : Y = \bff^{\top} \bvarphi^{\star} + \bx^{\top} \bbeta^{\star} + \varepsilon,
\end{align}
or $\bvarphi^\star = 0$ under the FARM model.
To tackle the testing problem, we propose a two-stage ANOVA test. In the first stage, we use marginal screening~\citep{Fan2008sure} to pre-select a group of variables which cope well the curse of high dimensionality. In the second stage, we derive the ANOVA-type test statistic. Asymptotic null distribution and the power of the test statistic are derived. In addition, we further extend the aforementioned two-stage ANOVA test to linear multi-modal models~\citep{Li2021}, whose data framework has been well applied in a wide range of scientific fields (e.g multi-omics data in genomics, multimodal neuroimaging data in neuroscience, multimodal electronic health records data in health care).

In summary, our main contributions are as follows:
\begin{enumerate}
    \item Motivated from the factor regression and sparse regression, we propose the \textbf{F}actor \textbf{A}ugmented (sparse linear) \textbf{R}egression \textbf{M}odel (FARM)~\eqref{factor_aug} [also~\eqref{eq_model_varphi}] and investigate in the parameter estimation properties on $\bgamma^{\star}$ and $\bbeta^{\star}$ given in \eqref{factor_aug}. Our work serves as an extension of~\cite{Fan2020factor} to a general setting with weaker assumptions.   It augments the sparse linear regression in useful directions of common factors.
    \item To further demonstrate the wide applicability of our methods, we extend our model to a more robust setting, where we only assume the existence of $(1+\vartheta)$-th moment ($\vartheta>0$) of our noise distribution. Leveraging the  $\ell_1$-penalized adaptive Huber estimation, we establish statistical estimation results for our parameters of interest. Comparing with those closely related literature \citep{Fan2020factor,Fan2021_factor}, our assumption on the moment condition of the noise variable is the weakest. Our robustified factor augmented regression also serves as an extension of \cite{AHR2020} to a more general setting.
    \item In terms of testing the adequacy of the factor regression, we propose  the \textbf{FabTest} by incorporating the factor structure into the de-biased estimators~\citep{Sara2014, Zhang2014, Javanmard2014}. Accompanied with Gaussian approximation, the asymptotic distribution of our test statistic is derived. As for implementation, we propose the multiplier bootstrap method to estimate the critical value in order to control the Type-I error.
    \item For testing the adequacy of sparse linear regression model, we propose a two stage ANOVA-type testing procedure. Asymptotic distribution (under the null) and power (under the alternative) of our constructed test statistic are investigated. In addition, we further extend the methodology to the multi-modal sparse linear regression model~\citep{Li2021}, by testing whether the sparse linear regression for some given modals is adequate.
    \item We conduct large scale simulation studies for our proposed methodology using both synthetic data and real data. Simulation results via synthetic data lend further support to our theoretical findings. As for real data, we  apply our methodology to the studies of the macroeconomics dataset named FRED-MD~\citep{McCracken2016}. The experimental results also illustrate the high efficiency and robustness of our model (FARM) against latent factor regression as well as sparse linear regression.
\end{enumerate}


\subsection{Notation}
For a vector $\bgamma = (\gamma_{1}, \ldots, \gamma_{m})^{\top} \in \bbR^{m}$, we denote its $\ell_{q}$ norm as $\|\bgamma\|_{q} = (\sum_{\ell = 1}^{m} |\gamma_{\ell}|^{q})^{1/q}$, $1\leq q < \infty$, and write $\|\bgamma\|_{\infty} = \max_{1\leq \ell \leq m} |\gamma_{\ell}|$. For any integer $m$, we denote $[m] = \{1, \ldots, m\}$. The sub-Gaussian norm of a scalar random variable $Z$ is defined as $\|Z\|_{\psi_{2}} = \inf \{t > 0 : \bbE \exp(Z^{2}/t^{2}) \leq 2\}$. For a random vector $\bx \in \bbR^{m}$, we use $\|\bx\|_{\psi_{2}} = \sup_{\|\bv\|_{2} = 1} \|\bv^{\top} \bx\|_{\psi_{2}}$ to denote its sub-Gaussian norm. Let $\mathbb{I}\{\cdot\}$ denote the indicator function and let $\mathbf{I}_K$ denotes the identity matrix in $\mathbb{R}^{K\times K}$.  For a matrix $\bA = [A_{jk}]$, we define $\|\bA\|_{\bbF} = \sqrt{\sum_{jk} A_{jk}^{2}}$, $\|\bA\|_{\max} = \max_{jk} |A_{jk}|$ and $\|\bA\|_{\infty} = \max_{j}\sum_{k} |A_{jk}|$ to be its Frobenius norm, element-wise max-norm and matrix $\ell_{\infty}$-norm, respectively. Moreover, we use $\lambda_{\min}(\bA)$ and $\lambda_{\max}(\bA)$ to denote the minimal and maximal eigenvalues of $\bA$, respectively. We use $|\cA|$ to denote the cardinality of set $\cA$. For two positive sequences $\{a_n\}_{n\ge 1}$, $\{b_n\}_{n\ge 1}$, we write $a_n=O(b_n)$ if there exists a positive constant $C$ such that $a_n\le C\cdot b_n$ and we write $a_n=o(b_n)$ if $a_n/b_n\rightarrow 0$. In addition, $a_n=O_{\mathbb{P}}(b_n)$ and $a_n=o_{\mathbb{P}}(b_n)$ have similar meanings as above except that the relationship of $a_n/b_n$ holds with high probability.

\subsection{RoadMap}
The rest of this paper is organized as follows. We study the parameter estimation properties of our proposed model (FARM) in section~\ref{est_factor}, where theoretical results of both regular and robust estimators are analyzed. In section~\ref{pcr_test}, we construct a de-biased test statistic to test the adequacy of latent factor regression model. In addition, in section~\ref{sparse_test}, we construct a two-stage ANOVA test to study the adequacy of sparse linear regression under the setting with highly correlated features. Moreover, to corroborate our theoretical findings, in section~\ref{simulation}, we conduct exhaustive simulation studies. Last but not least, we apply our methodology to study the real data FRED-MD in section~\ref{real_data}.

\section{Factor Augmented Regression Model}\label{est_factor}

The primary objective of this section is to propose a regularized estimation method for our factor augmented sparse linear model and investigate the corresponding statistical properties. Suppose we observe $n$ independent and identically distributed (i.i.d.)~random samples $\{(\bx_{t}, Y_{t})\}_{t = 1}^{n}$ from $(\bx, Y)$, which satisfy that
\begin{align}
\label{eq_factor_linear}
    \bx_{t} = \bB \bff_{t} + \bu_{t} \enspace \mathrm{and} \enspace Y_{t} = \bff_{t}^{\top}\bgamma^{\star} + \bu_{t}^{\top} \bbeta^{\star} + \varepsilon_{t}, \quad t = 1, \ldots, n,
\end{align}
where $\bff_{1}, \ldots, \bff_{n}\in\bbR^{K}$, $\bu_{1}, \ldots, \bu_{n} \in \bbR^{d}$ and $\varepsilon_{1}, \ldots, \varepsilon_{n} \in \bbR$ are i.i.d.~realizations of $\bff$, $\bu$ and $\varepsilon$, respectively. To ease the presentation, we rewrite~\eqref{eq_factor_linear} in a more compact matrix form as follows,
\begin{align}
\label{eq_model_matrix}
    \bX &= \bF \bB^{\top} + \bU,\cr
    \bY &= \bF \bgamma^{\star} + \bU \bbeta^{\star} + \cE,
\end{align}
where $\bX = (\bx_{1}, \ldots, \bx_{n})^{\top}$, $\bF = (\bff_{1}, \ldots, \bff_{n})^{\top}$, $\bU = (\bu_{1}, \ldots, \bu_{n})^{\top}$, $\bY = (Y_{1}, \ldots, Y_{n})^{\top}$ and $\cE = (\varepsilon_{1}, \ldots, \varepsilon_{n})^{\top}$. Throughout the whole paper, we assume we only get access to observations $\{(\bx_{t}, Y_{t})\}_{t = 1}^{n}$. Both the latent factors $\bF$ and the idiosyncratic components $\bU$ are unobserved and need to be estimated from the observed predictors $\bX$. Thus, in the following, we shall first illustrate how to estimate $\bF$ and $\bU$ and then proceed with the regularized estimation for model~\eqref{eq_model_matrix}.

\subsection{Factor Estimation}
\label{fac_est}

Since only the predictor vector $\bx$ is observable, the latent factor $\bff$ and the corresponding loading matrix $\bB$ are not identifiable under the factor model~\eqref{eq_factor_model}. More specifically, for any non-singular matrix $\bS\in \bbR^{K \times K}$, we have $\bx = \bB\bff + \bu = (\bB \bS) (\bS^{-1}\bff) + \bu$. To resolve this issue, we impose the following identifiability conditions~\citep{Bai2003inferential, Fan2013POET}:
\begin{align*}
    \Cov(\bff) = \bI_{K} \enspace \mathrm{and} \enspace \bB^{\top} \bB \enspace \mathrm{is \ diagonal}.
\end{align*}
Consequently, the constrained least squares estimator of $(\bF, \bB)$ based on $\bX$ is given by
\begin{align*}
    (&\hat{\bF}, \hat{\bB}) = \underset{\bF \in \bbR^{n \times K}, \bB \in \bbR^{d \times K}}{\arg\min} \ \|\bX - \bF \bB^\top\|_{\bbF}^2\cr
    &\mathrm{subject\ to} \enspace \frac{1}{n} \bF^\top \bF = \bI_{K} \enspace \mathrm{and} \enspace \bB^\top \bB \enspace \mathrm{is \ diagonal}.
\end{align*}
Elementary manipulation yields that the columns of $\hat{\bF}/\sqrt{n}$ are the eigenvectors corresponding to the largest $K$ eigenvalues of the matrix $\bX \bX^{\top}$ and $\hat{\bB} = (\hat{\bF}^{\top}\hat{\bF})^{-1}\hat{\bF}^{\top}\bX = n^{-1}\hat{\bF}^{\top}\bX$. Then the least squares estimator for $\bU$ is given by $\hat{\bU} = \bX - \hat{\bF} \hat{\bB}^\top = (\bI_{n} - n^{-1}\hat{\bF}\hat{\bF}^{\top})\bX$.

Before presenting the asymptotic properties of the estimators $\{\hat{\bF}, \hat{\bB}, \hat{\bU}\}$, we first impose some regularity conditions.

\begin{assumption}
\label{Assumption_UF}
There exists a positive constant $c_{0} < \infty$ such that $\|\bff\|_{\psi_{2}} \leq c_{0}$ and $\|\bu\|_{\psi_{2}} \leq c_{0}$.
\end{assumption}

\begin{assumption}
\label{Assumption_factor}
There exists a constant $\tau > 1$ such that $d/\tau \leq \lambda_{\min}(\bB^{\top} \bB) \leq \lambda_{\max}(\bB^{\top} \bB) \leq d\tau$.
\end{assumption}

\begin{assumption}
\label{Assumption_moment}
Let $\bSigma = \Cov(\bu)$. There exists a constant $\Upsilon > 0$ such that $\|\bB\|_{\max} \leq \Upsilon$ and
\begin{align*}
    \bbE |\bu^{\top} \bu - \tr(\bSigma)|^{4} \leq \Upsilon d^{2}.
\end{align*}
\end{assumption}

\begin{assumption}
\label{Assumption_bSigma}
There exist a positive constant $\kappa < 1$ such that $\kappa \leq \lambda_{\min}(\bSigma)$, $\|\bSigma\|_{1} \leq 1/\kappa$ and $\min_{1\leq k, \ell \leq d} \Var(u_{k}u_{\ell}) \geq \kappa$.
\end{assumption}

\begin{remark}
Assumptions~\ref{Assumption_UF}--\ref{Assumption_bSigma} are standard assumptions in the studies of large dimensional factor model. We refer to~\citet{Bai2003inferential}, \citet{Fan2013POET} and~\citet{Li2018embracing} for more details.
\qed
\end{remark}

We next summarize the theoretical results related to consistent factor estimation in the following proposition which directly follows from Lemmas D.1 and D.2 in~\citet{Wang2017}.

\begin{proposition}
\label{Lemma_factor_consistency}
Assume that $\log n = o(d)$. Let $\bH = n^{-1}\bV^{-1}\hat{\bF}^\top \bF \bB^\top \bB$, where $\bV \in \bbR^{K \times K}$ is a diagonal matrix consisting of the first $K$ largest eigenvalues of the matrix $n^{-1}\bX\bX^{\top}$. Then, under Assumptions~\ref{Assumption_UF}--\ref{Assumption_bSigma}, we have
\begin{enumerate}
    \item $\|\hat{\bF} - \bF \bH^\top\|_{\bbF}^{2} = O_{\bbP}(n/d + 1/n)$.
    \item For any $\mathcal{I} \subset \{1, 2, \ldots, d\}$, we have $\max_{\ell \in \mathcal{I}} \sum_{t = 1}^{n} |\hat{u}_{t\ell} - u_{t\ell}|^2 = O_{\bbP}(\log |\mathcal{I}| + n/d)$.
    \item $\|\bH^\top \bH - \bI_K\|_{\bbF}^{2} = O_{\bbP} (1/n + 1/d)$.
    \item $\max_{\ell \in [d]} \|\hat{\bb}_{\ell} - \bH\bb_{\ell}\|_{2}^{2} = O_{\bbP}\{(\log d)/n\}$.
\end{enumerate}
\end{proposition}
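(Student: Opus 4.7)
The plan is to reduce the four claims to standard factor model perturbation arguments of the kind developed in \citet{Bai2003inferential} and \citet{Fan2013POET}, in the precise form packaged in \citet{Wang2017}, Lemmas D.1--D.2. The starting point is the eigen-equation satisfied by the PCA estimator, namely $n^{-1}\bX\bX^{\top}\hat{\bF} = \hat{\bF}\bV$, together with the model-based expansion $\bX\bX^{\top} = \bF\bB^{\top}\bB\bF^{\top} + \bF\bB^{\top}\bU^{\top} + \bU\bB\bF^{\top} + \bU\bU^{\top}$. Substituting yields the fundamental identity
\begin{equation*}
\hat{\bF} - \bF\bH^{\top} = \bV^{-1}\Big\{ n^{-1}\bF\bB^{\top}\bU^{\top}\hat{\bF} + n^{-1}\bU\bB\bF^{\top}\hat{\bF} + n^{-1}\bU\bU^{\top}\hat{\bF}\Big\},
\end{equation*}
so the whole argument reduces to (a) a lower bound on the eigenvalues in $\bV$ and (b) Frobenius-norm bounds on the three cross-product terms on the right-hand side.

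First, I would verify that $\lambda_{\min}(\bV) \gtrsim d$ with high probability. Under Assumption~\ref{Assumption_factor}, the population analogue $\bB^{\top}\bB$ already has eigenvalues of order $d$; the sub-Gaussianity of $\bff$ (Assumption~\ref{Assumption_UF}) gives $\|n^{-1}\bF^{\top}\bF - \bI_K\|_{\bbF} = O_{\bbP}(1/\sqrt{n})$ by standard covariance concentration, and Assumption~\ref{Assumption_moment} controls the size of $n^{-1}\bU\bU^{\top}$ so that it contributes only a lower-order perturbation to the top-$K$ eigenvalues by Weyl's inequality. Next, each of the three error pieces is bounded in Frobenius norm using sub-Gaussian concentration: $n^{-2}\|\bB^{\top}\bU^{\top}\hat{\bF}\|_{\bbF}^{2}$ and its transpose are $O_{\bbP}(d)$ after applying Assumptions~\ref{Assumption_UF} and~\ref{Assumption_bSigma}, while the quartic piece $n^{-2}\|\bU\bU^{\top}\hat{\bF}\|_{\bbF}^{2}$ is handled using the fourth-moment bound in Assumption~\ref{Assumption_moment}. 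Dividing by $\lambda_{\min}(\bV)^{2} \asymp d^{2}$ delivers the rate $n/d + 1/n$ for part (1). Part (3) follows from part (1) and the orthogonality relation $n^{-1}\hat{\bF}^{\top}\hat{\bF} = \bI_K$, by expanding $\bH^{\top}\bH - \bI_K = n^{-1}(\bF\bH^{\top})^{\top}(\bF\bH^{\top} - \hat{\bF}) + n^{-1}(\bF\bH^{\top} - \hat{\bF})^{\top}\hat{\bF}$ and invoking $\|n^{-1}\bF^{\top}\bF - \bI_K\|_{\bbF} = O_{\bbP}(1/\sqrt{n})$.

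For part (4), I would start from $\hat{\bB} = n^{-1}\bX^{\top}\hat{\bF}$, write $\hat{\bb}_{\ell} - \bH\bb_{\ell} = n^{-1}\bX_{\cdot\ell}^{\top}(\hat{\bF} - \bF\bH^{\top}) + n^{-1}(\bX_{\cdot\ell} - \bB_{\ell\cdot}\bF^{\top})^{\top}\bF\bH^{\top} + n^{-1}\bB_{\ell\cdot}(\bF^{\top}\bF - n\bI_K)\bH^{\top}$, and take a maximum over $\ell\in[d]$. The first term is controlled by part (1) combined with a Cauchy--Schwarz bound and a uniform bound on $n^{-1}\|\bX_{\cdot\ell}\|^{2}$; the second term is the familiar $n^{-1/2}(\log d)^{1/2}$-rate deviation $\max_{\ell}\|n^{-1}\bU_{\cdot\ell}^{\top}\bF\|$ which comes from a union bound over sub-exponential products; the third term is $O_{\bbP}(1/\sqrt{n})$ uniformly since $\|\bB\|_{\max}\le \Upsilon$. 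Combining gives the advertised $(\log d)/n$ rate. Part (2) is then a corollary of parts (1) and (4): writing $\hat u_{t\ell} - u_{t\ell} = \bff_{t}^{\top}\bb_{\ell} - \hat\bff_{t}^{\top}\hat\bb_{\ell}$, decomposing into $(\bff_{t} - \bH^{\top}\hat\bff_{t})^{\top}\bb_{\ell}$ plus $\hat\bff_{t}^{\top}(\bH\bb_{\ell} - \hat\bb_{\ell})$, squaring, summing over $t$, and maximizing over $\ell\in\mathcal{I}$ gives the $\log|\mathcal{I}| + n/d$ rate after using part (1) for the first piece (summed over $t$) and part (4) together with $n^{-1}\sum_{t}\|\hat\bff_{t}\|^{2} = K$ for the second piece; the $\log|\mathcal{I}|$ factor enters only through the maximum in part (4).

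The main obstacle is keeping careful track of which terms contribute $n/d$ versus $1/n$ versus $\log d /n$, since they arise from different concentration mechanisms (cross-section averaging of idiosyncratic noise versus time-series averaging of factor products) and the sharp rates require exploiting both the exact identifiability normalization $n^{-1}\hat\bF^\top\hat\bF = \bI_K$ and the eigenvalue gap of order $d$. Since all ingredients are established verbatim in Lemmas D.1--D.2 of \citet{Wang2017}, I would primarily cite that source and only verify that our Assumptions~\ref{Assumption_UF}--\ref{Assumption_bSigma} imply the conditions used there.
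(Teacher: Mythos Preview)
Your proposal is correct and takes exactly the same approach as the paper, which proves the proposition by direct citation of Lemmas D.1--D.2 in \citet{Wang2017}; your additional sketch of the underlying perturbation argument is accurate in substance (modulo a harmless slip---the $\bV^{-1}$ in the fundamental identity should multiply on the right, not the left, since $\hat{\bF}\bV = n^{-1}\bX\bX^{\top}\hat{\bF}$).
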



\begin{remark}
In practice, the number of latent factors $K$ is typically unknown and it is an important issue to determine $K$ in a data-driven way. There have been various methods proposed in the literature to estimate the number $K$~\citep{Bai2002,Lam2012,Ahn2013, fan2021estimating}. Our theories always work as long as we replace $K$ by any consistent estimator $\hat{K}$, i.e. we only require
\begin{align*}
    \bbP(\hat{K} = K) \to 1, \enspace \mathrm{as} \enspace n \to \infty.
\end{align*}
Thus, without loss of generality, we assume the number of factors $K$ is known throughout all the theories  developed in this paper. As for the application part,  throughout this paper, we utilize the eigenvalue ratio method~\citep{Lam2012, Ahn2013} to select the number of factors. More specifically, we let $\lambda_{k}(\bX\bX^{\top})$ denote the eigenvalues of the Gram matrix $\bX\bX^{\top}$ and the number of factors is given by
\begin{align*}
    \hat{K} = \underset{K\leq \cK}{\arg\max} \frac{\lambda_{k}(\bX\bX^{\top})}{\lambda_{k + 1}(\bX\bX^{\top})},
\end{align*}
where $1\leq \cK \leq n$ is a prescribed upper bound for $K$.
\qed
\end{remark}

\subsection{Regularization Estimation}
\label{reg_est}
Under the high dimensional regime where the dimension $d$ can be much larger than the sample size $n$, it is often assumed that only a small portion of the predictors contribute to the response variable, which amounts to assuming that the true parameter vector $\bbeta^{\star}$ is sparse. 
Then the regularized estimator for the unknown parameter vectors $\bbeta^{\star}$ and $\bgamma^{\star}$ of our factor augmented linear model is defined as follows:
\begin{align}
\label{eq_Lasso}
    (\hbbeta_\lambda, \hat{\bgamma}) = \underset{\bbeta \in \bbR^{d}, \bgamma\in\bbR^{K}}{\arg\min} \left\{\frac{1}{2n}\|\bY - \hat{\bU} \bbeta - \hat{\bF} \bgamma\|_{2}^{2} + \lambda \|\bbeta\|_{1}\right\},
\end{align}
where $\lambda > 0$ is a tuning parameter.

We let $\tilde{\bY} = (\bI_{n} - \hat{\bP}) \bY$ denote the residuals of the response vector $\bY$ after projecting onto the column space of $\hat{\bF}$, where $\hat{\bP}  = n^{-1}\hat{\bF}\hat{\bF}^{\top}$ is the corresponding projection matrix. Recall that $\hat{\bU} = (\bI_{n} - \hat{\bP})\bX$. Hence $\hat{\bF}^\top \hat{\bU} = 0$ and it is straightforward to verify that the solution of~\eqref{eq_Lasso} is equivalent to
\begin{align*}
    \hat{\bbeta}_\lambda &= \underset{\bbeta\in\bbR^d}{\arg\min}\left\{\frac{1}{2n}\|\tilde{\bY} - \hat{\bU}\bbeta\|_{2}^{2} + \lambda \|\bbeta\|_1\right\}, \cr
    \hat{\bgamma} &= (\hat{\bF}^\top \hat{\bF})^{-1}\hat{\bF}^\top \bY =  \frac{1}{n}\hat{\bF}^\top \bY.
\end{align*}
For any subset $\cS$ of $\{1, \ldots, d\}$, we define the convex cone $\cC(\cS, 3) = \{\bdelta \in \bbR^{d} : \|\bdelta_{\cS^{c}}\|_{1} \leq 3 \|\bdelta_{\cS}\|_{1}\}$. For simplicity of notation, we write
\begin{align}
\label{def_cV}
    \cV_{n, d} = \frac{n}{d} + \sqrt{\frac{\log d}{n}} + \sqrt{\frac{n \log d}{d}}.
\end{align}
To investigate the consistency property of $(\hat{\bbeta}_{\lambda}, \hat{\bgamma})$, we impose the following moment condition on the random noise $\varepsilon$.

\begin{assumption}
\label{Assumption_e}
There exists a positive constant $c_{1} < \infty$ such that $\|\varepsilon\|_{\psi_{2}} \leq c_{1}$.
\end{assumption}

\begin{theorem}
\label{Theorem_beta}
Recall $\bvarphi^{\star} = \bgamma^{\star} - \bB^{\top}\bbeta^{\star} \in \bbR^{K}$. Under Assumptions~\ref{Assumption_UF}--\ref{Assumption_e}, we have
\begin{align*}
    \|\hat{\bgamma} - \bH \bgamma^{\star}\|_{2} = O_{\bbP}\left\{\frac{1}{\sqrt{n}} + \left(\frac{1}{\sqrt{n}} + \frac{1}{\sqrt{d}}\right)\|\bvarphi^{\star}\|_{2} + \|\bbeta^{\star}\|_{1} \left(\sqrt{\frac{\log |\cS_{\star}|}{n}} + \frac{1}{\sqrt{d}}\right)\right\},
\end{align*}
where $\cS_{\star} = \{j \in [d]: \beta_{j}^{\star} \neq 0\}$ and $|\cS_{\star}|$ is its cardinality. 
Furthermore, if
$
    |\cS_{\star}|\left(\frac{\log d}{n} + \frac{1}{d}\right) \to 0$,
then, by taking $\lambda = (\cI_{0}/n)\|\hat{\bU}^\top (\tilde{\bY} - \hat{\bU}\bbeta^{\star})\|_{\infty}$ for some constant $\cI_{0} \geq 2$, we have $\hbbeta_{\lambda} - \bbeta^{\star} \in \cC(\cS_{\star}, 3)$ and
\begin{align}
\label{eq_bound_bbeta}
    \|\hat{\bbeta}_{\lambda} - \bbeta^{\star}\|_{2} = O_{\bbP}\left(\sqrt{\frac{|\cS_{\star}|\log p}{n}} + \frac{\cV_{n, d}\|\bvarphi^{\star}\|_{2}\sqrt{|\cS_{\star}|}}{n}\right).
\end{align}
\end{theorem}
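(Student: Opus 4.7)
The plan is to exploit the equivalent form of the factor-augmented model, $\bY = \bF\bvarphi^{\star} + \bX\bbeta^{\star} + \cE$ with $\bvarphi^{\star} = \bgamma^{\star} - \bB^{\top}\bbeta^{\star}$, which cleanly separates the factor contribution from the observed-covariate contribution and matches the shape of the bounds in the theorem. For $\hat{\bgamma} = n^{-1}\hat{\bF}^{\top}\bY$ I substitute this representation and use $\hat{\bB}^{\top} = n^{-1}\hat{\bF}^{\top}\bX$ to write
\[
\hat{\bgamma} - \bH\bgamma^{\star} = (n^{-1}\hat{\bF}^{\top}\bF - \bH)\bvarphi^{\star} + (\hat{\bB} - \bH\bB)^{\top}\bbeta^{\star} + n^{-1}\hat{\bF}^{\top}\cE.
\]
The three pieces are bounded using Proposition~\ref{Lemma_factor_consistency}: item~(1), combined with the sub-Gaussian concentration $\|n^{-1}\bF^{\top}\bF - \bI_K\|_{\mathrm{op}} = O_{\bbP}(n^{-1/2})$, controls the first; item~(4), applied uniformly over the support $\cS_{\star}$ (which gives $\max_{\ell\in\cS_{\star}}\|\hat{\bb}_{\ell} - \bH\bb_{\ell}\|_{2} = O_{\bbP}(\sqrt{\log|\cS_{\star}|/n} + 1/\sqrt{d})$), handles the second via $\|(\hat{\bB}-\bH\bB)^{\top}\bbeta^{\star}\|_{2} \le \|\bbeta^{\star}\|_{1}\max_{\ell\in\cS_{\star}}\|\hat{\bb}_{\ell} - \bH\bb_{\ell}\|_{2}$; and the third is $O_{\bbP}(1/\sqrt{n})$ by sub-Gaussian concentration, after noting that $\hat{\bF}\perp\cE$ because $\hat{\bF}$ is a function of $\bX$, which is independent of $\cE$.

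For $\hat{\bbeta}_{\lambda}$ I follow the usual Lasso playbook, but the analysis hinges on the following algebraic simplification. Substituting the equivalent model form into $\tilde{\bY} = (\bI_{n} - \hat{\bP})\bY$, where $\hat{\bP} = n^{-1}\hat{\bF}\hat{\bF}^{\top}$, gives
\[
\tilde{\bY} - \hat{\bU}\bbeta^{\star} = (\bI_{n} - \hat{\bP})\bF\bvarphi^{\star} + (\bI_{n} - \hat{\bP})\cE;
\]
moreover, since $\hat{\bF}^{\top}\hat{\bU} = 0$ forces $\hat{\bU}^{\top}(\bI_{n} - \hat{\bP}) = \hat{\bU}^{\top}$, we obtain the clean identity $\hat{\bU}^{\top}(\tilde{\bY} - \hat{\bU}\bbeta^{\star}) = \hat{\bU}^{\top}\bF\bvarphi^{\star} + \hat{\bU}^{\top}\cE$. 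With the stated choice $\lambda = (\cI_{0}/n)\|\hat{\bU}^{\top}(\tilde{\bY} - \hat{\bU}\bbeta^{\star})\|_{\infty}$ and $\cI_{0}\ge 2$, the Lasso basic inequality immediately yields the cone condition $\hat{\bbeta}_{\lambda} - \bbeta^{\star}\in\cC(\cS_{\star}, 3)$; once a restricted-eigenvalue (RE) constant for $\hat{\bU}^{\top}\hat{\bU}/n$ is in hand on this cone, the standard oracle argument gives $\|\hat{\bbeta}_{\lambda} - \bbeta^{\star}\|_{2} \lesssim \lambda\sqrt{|\cS_{\star}|}$. Thus the theorem reduces to (a) an RE bound for $\hat{\bU}$ and (b) a high-probability bound on $\lambda$.

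The main obstacle is (a), the RE control for the estimated design $\hat{\bU}$. I plan to transfer RE from the unobserved $\bU$, for which RE follows from Assumption~\ref{Assumption_bSigma} via sub-Gaussian covariance concentration, by writing $\hat{\bU}^{\top}\hat{\bU}/n = \bU^{\top}\bU/n + \Delta$ and bounding $\sup_{\bdelta\in\cC(\cS_{\star},3),\,\|\bdelta\|_{2}=1}|\bdelta^{\top}\Delta\bdelta|$. For $\bdelta$ in this cone, $\|\bdelta\|_{1}\le 4\sqrt{|\cS_{\star}|}\|\bdelta\|_{2}$, so the perturbation is controlled by $|\cS_{\star}|\max_{\ell}\|\hat{\bu}_{\cdot\ell} - \bu_{\cdot\ell}\|_{2}^{2}/n$, which is $O_{\bbP}(|\cS_{\star}|(\log d/n + 1/d))$ by item~(2) of Proposition~\ref{Lemma_factor_consistency} and is $o_{\bbP}(1)$ under the stated scaling $|\cS_{\star}|(\log d/n + 1/d)\to 0$. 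Bounding $\lambda$ in step~(b) is comparatively routine: $\|\hat{\bU}^{\top}\cE\|_{\infty} = O_{\bbP}(\sqrt{n\log d})$ follows by conditioning on $\hat{\bU}$, exploiting $\hat{\bU}\perp\cE$, and applying a sub-Gaussian union bound together with $\max_{\ell}\|\hat{\bu}_{\cdot\ell}\|_{2} = O_{\bbP}(\sqrt{n})$; while $\|\hat{\bU}^{\top}\bF\bvarphi^{\star}\|_{\infty}$ is handled by expanding $\hat{\bU} = (\bI_{n} - \hat{\bP})(\bF\bB^{\top} + \bU)$, using the near-orthogonality $\|\bF^{\top}(\bI_{n} - \hat{\bP})\bF\|_{\mathrm{op}} = O_{\bbP}(n/d + 1/n)$ (obtained by inserting $\bF = \hat{\bF}(\bH^{\top})^{-1} + (\bF - \hat{\bF}(\bH^{\top})^{-1})$ into the projection and using $(\bI_{n}-\hat{\bP})\hat{\bF}=0$), and controlling the remaining stochastic piece $\bU^{\top}\bF\bvarphi^{\star}$ by sub-Gaussian concentration. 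Dividing by $n$, multiplying by $\sqrt{|\cS_{\star}|}$, and combining with the RE constant produces the two terms $\sqrt{|\cS_{\star}|\log p/n}$ and $\cV_{n,d}\|\bvarphi^{\star}\|_{2}\sqrt{|\cS_{\star}|}/n$ displayed in~\eqref{eq_bound_bbeta}.
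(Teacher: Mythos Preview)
Your overall architecture is correct and coincides with the paper's: decompose $\hat{\bgamma}-\bH\bgamma^{\star}$ via the $\bvarphi^{\star}$-form and bound each piece with Proposition~\ref{Lemma_factor_consistency}; for $\hat{\bbeta}_{\lambda}$, use the score identity $\hat{\bU}^{\top}(\tilde{\bY}-\hat{\bU}\bbeta^{\star})=\hat{\bU}^{\top}\bF\bvarphi^{\star}+\hat{\bU}^{\top}\cE$, derive the cone inclusion from the choice of $\lambda$, transfer the restricted eigenvalue from $\bU$ to $\hat{\bU}$ through item~(2), and finish by the standard oracle inequality $\|\hat{\bbeta}_{\lambda}-\bbeta^{\star}\|_{2}\lesssim\lambda\sqrt{|\cS_{\star}|}$.

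The gap is in your control of $\|\hat{\bU}^{\top}\bF\bvarphi^{\star}\|_{\infty}$. After the decomposition $\hat{\bU}^{\top}\bF=\bB\bF^{\top}(\bI_{n}-\hat{\bP})\bF+\bU^{\top}(\bI_{n}-\hat{\bP})\bF$, the piece you call ``$\bU^{\top}\bF\bvarphi^{\star}$'' is really $\bU^{\top}(\bI_{n}-\hat{\bP})\bF\bvarphi^{\star}$, and neither of the two natural shortcuts reaches $\cV_{n,d}\|\bvarphi^{\star}\|_{2}$: (i) dropping the projection and using sub-Gaussian concentration on $\bU^{\top}\bF\bvarphi^{\star}$ gives order $\sqrt{n\log d}\,\|\bvarphi^{\star}\|_{2}$, and (ii) Cauchy--Schwarz with $\|(\bI_{n}-\hat{\bP})\bF\|_{\bbF}=O_{\bbP}(\sqrt{n/d}+n^{-1/2})$ gives order $(n/\sqrt{d}+1)\|\bvarphi^{\star}\|_{2}$. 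In the regime $d\asymp n$ both are $\gtrsim\sqrt{n}\,\|\bvarphi^{\star}\|_{2}$, whereas $\cV_{n,d}\asymp\sqrt{\log n}$, so your sketch overshoots the target rate by a polynomial factor. What is needed is to exploit $\hat{\bU}^{\top}\hat{\bF}=0$ to write $\hat{\bU}^{\top}\bF=-\hat{\bU}^{\top}(\hat{\bF}-\bF\bH^{\top})(\bH^{\top})^{-1}$ and then feed in the \emph{termwise} expansion of $\hat{\bF}-\bF\bH^{\top}$ (as in \citet{Bai2003inferential}, \citet{Fan2013POET}) so that each summand, when hit on the left by $\bU^{\top}$ or $(\hat{\bU}-\bU)^{\top}$, picks up exactly one of the three rates $n/d$, $\sqrt{(\log d)/n}$, $\sqrt{n\log d/d}$ that make up $\cV_{n,d}$. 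The cross term $(\hat{\bU}-\bU)^{\top}(\hat{\bF}-\bF\bH^{\top})$ already lands at $\cV_{n,d}$ by Cauchy--Schwarz and items~(1)--(2), but $\bU^{\top}(\hat{\bF}-\bF\bH^{\top})$ does not without this finer decomposition. This is the only substantive omission; the remaining pieces (including the refinement of item~(4) over $\cS_{\star}$ to get $\log|\cS_{\star}|$ and the $1/\sqrt{d}$ bias term) are routine.
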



\begin{remark}
In most of literature investigating the regularized estimation of sparse linear regression model~\eqref{eq4}, it is commonly assumed that the observed covariate vector $\bx$ is a sub-Gaussian random vector with bounded sub-Gaussian norm $\|\bx\|_{\psi_{2}}$. See, for instance, \citet{Loh2012}, \citet{Nickl2013confidence}, \citet{Sara2014}, \citet{Zhang2017} and many others. However, such assumption can be unreasonable in the presence of highly correlated covariates. To see this, suppose now both $\bff$ and $\bu$ are Gaussian random vectors and the underlying $\bx$ satisfies the factor model~\eqref{eq_factor_model}. Then $\bx$ is also a Gaussian random vector with $\Cov(\bx) = \bB\bB^{\top} + \bSigma$. Under the pervasiveness condition (Assumption~\ref{Assumption_factor}) and Assumption~\ref{Assumption_bSigma}, it is straightforward to verify that $\|\bx\|_{\psi_{2}} = \sqrt{8/3} \lambda_{\max}(\bB\bB^{\top} + \bSigma) \asymp d$, which violates the assumption on bounded sub-Gaussian norm. In contrast, our model can circumvent such issue because we decompose the covariate $\bx$ into $(\bff,\bu)$, and we only need impose sub-Gaussian assumption on $(\bff,\bu)$. As the sparse linear regression model serves as a special case to our model, our model serves as a more robust choice to conduct parameter estimation comparing with using linear regression directly, even if the sparse linear regression model is adequate.
\qed
\end{remark}


\begin{remark}
Theorem~\ref{Theorem_beta} substantially generalize the results in~\citet{Fan2020factor} with weaker assumptions. First, we did not impose the irrepresentable condition on the design matrix $\bU$, only the lower bound on $\bSigma = \Cov(\bu)$ is required. In addition, although~\citet{Fan2020factor} also decompose the covariate $\bx$ into $(\bff,\bu)$ in order to get precise estimator for $\bbeta^{\star}$, they mainly focus on the linear model $Y = \bx^{\top}\bbeta^{\star} + \varepsilon$ which corresponds to the special case with $\bvarphi^{\star} = 0$ in our results given in Theorem~\ref{Theorem_beta}.
\qed
\end{remark}

\begin{remark}
We note that our study is very different from the related work by~\cite{Fan2021_factor}, although they also study one kind of factor augment linear regression model. To be more specific, they assume the response $Y_{i, t}$ is given in a penal form with $i\in [N], t\in [T]$, which is generated from the model $Y_{i, t}=\blambda_i^\top \bff_t+u_{i, t}$. Here $u_{i, t}, i\in [N], t\in [T]$ is the idiosyncratic component. They incorporate the sparse linear regression into their model by assuming $u_{i,t}=\bbeta_i^\top\bu_{-i,t}+\epsilon_{i,t}$, $\forall i\in [N]$. Thus, their factor augmented sparse linear regression model heavily relies on the penal data structure.  In contrast, we study the cross-sectional data and focus on different inference problems.

\qed
\end{remark}

\subsection{Factor Augmented Robust Linear Regression}
In reality, datasets, especially collected from the field of finance, are often contaminated by noises with relatively heavy tails. To resolve such issue, we leverage the adaptive Huber regression to study the parameter of interest in our FARM under the existence of heavy-tailed noise~\citep{AHR2020}.

We first introduce some notation and basic definitions. Let $\rho_{\omega}(\cdot)$ denote the Huber function,
\begin{align*}
    \rho_{\omega}(z) = \left\{
    \begin{array}{cc}
        z^{2}/2, & \mathrm{if} \enspace |z|\leq \omega,  \\
        \omega z - \omega^{2}/2, & \mathrm{if} \enspace |z| > \omega,
    \end{array}
    \right.
\end{align*}
where $\omega > 0$ is the robustification parameter which balances robustness and bias. Following the intuition of~\eqref{eq_Lasso}, our factor augmented Huber estimator for $(\bbeta^{\star}, \bgamma^{\star})$ is given by
\begin{align}
\label{robust_regression}
    (\hbbeta_{h}, \hbgamma_{h}) = \underset{\bbeta\in\bbR^{d}, \bgamma \in \bbR^{K}}{\arg\min}\left\{\frac{1}{n}\sum_{t = 1}^{n} \rho_{\omega} (y_{t} - \hat{\bu}_{t}^{\top}\bbeta - \hat{\bff}_{t}^{\top}\bgamma) + \lambda \|\bbeta\|_{1}\right\},
\end{align}
where $\lambda > 0$ is a tuning parameter. For simplicity of notation, we write $\hat{\bphi}_{h} = (\hbbeta_{h}^{\top}, \hbgamma_{h}^{\top})^{\top} \in \bbR^{d + K}$ and $\tilde{\bphi} = (\bbeta^{\star \top}, \tilde{\bgamma}^{\top})^{\top}\in \bbR^{d + K}$, where $\tilde{\bgamma} = \hat{\bB}^{\top} \bbeta^{\star} + n^{-1} \hat{\bF}^{\top} \bF\bvarphi^{\star}$. The following theorem establishes the statistical consistency of $\hat{\bphi}_{h}$.

\begin{proposition}
\label{Proposition_Huber_regression}
Assume that $\bbE |\varepsilon|^{1 + \vartheta} < \infty$ for some constant $\vartheta > 0$. Let
\begin{align*}
    \omega \asymp \left(\frac{n}{\log d}\right)^{\frac{1}{1 + (\vartheta \wedge 1)}} \enspace \mathrm{and} \enspace \lambda \asymp \left(\frac{\log d}{n}\right)^{\frac{\vartheta \wedge 1}{1 + (\vartheta \wedge 1)}}.
\end{align*}
Furthermore, we assume that $(|\cS_{\star}| + K)(\log d)^{3/2} = o(n)$,
\begin{align}
\label{eq_Huber_cond_varphi}
    \frac{\log n}{n + \sqrt{d}} \|\bvarphi^{\star}\|_{2} = o(\omega) \enspace \mathrm{and} \enspace \cV_{n, d} \|\bvarphi^{\star}\|_{2} = O(\omega \log d).
\end{align}
Then, under Assumptions~\ref{Assumption_UF}--\ref{Assumption_bSigma}, we have
\begin{align*}
    \|\hat{\bphi}_{h} - \tilde{\bphi}\|_{1} = O_{\bbP}\left\{(|\cS_{\star}| + K)\left(\frac{\log d}{n}\right)^{\frac{\vartheta \wedge 1}{1 + (\vartheta \wedge 1)}}\right\}.
\end{align*}
\end{proposition}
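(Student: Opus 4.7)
The plan is to cast program~\eqref{robust_regression} as a penalized $M$-estimator on an approximately linear working model and then adapt the adaptive-Huber $\ell_1$ machinery of~\citet{AHR2020}, with the twist that both the working design and the pseudo-noise come from plugging in the PCA estimates of Proposition~\ref{Lemma_factor_consistency}. I would first use the per-sample identity $\bu_t^\top \bbeta^\star + \bff_t^\top \bgamma^\star = \hat{\bu}_t^\top \bbeta^\star + \hat{\bff}_t^\top \hat{\bB}^\top \bbeta^\star + \bff_t^\top \bvarphi^\star$, obtained from $\bx_t = \bB\bff_t + \bu_t = \hat{\bB}\hat{\bff}_t + \hat{\bu}_t$, and then project $\bF\bvarphi^\star$ onto the column span of $\hat{\bF}$. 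Because $n^{-1}\hat{\bF}^\top\hat{\bF} = \bI_K$, this rewrites the regression exactly as
\begin{align*}
    \bY = \hat{\bU}\bbeta^\star + \hat{\bF}\tilde{\bgamma} + \tilde{\cE}, \qquad \tilde{\cE} = \cE + (\bI_n - n^{-1}\hat{\bF}\hat{\bF}^\top)\bF\bvarphi^\star,
\end{align*}
making $\tilde{\bphi}$ the pseudo-true parameter for Huber regression of $\bY$ on the block-orthogonal design $(\hat{\bU}, \hat{\bF})$.

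Writing $\hat{\bz}_t = (\hat{\bu}_t^\top, \hat{\bff}_t^\top)^\top$, the remainder is the standard two-pronged argument: a score bound $\|n^{-1}\sum_{t=1}^{n} \rho_\omega'(\tilde\varepsilon_t)\hat{\bz}_t\|_\infty \le \lambda/2$ and localized restricted strong convexity (RSC) for the Huber empirical risk on the cone $\cC(\cS_\star \cup \{d+1,\ldots,d+K\}, 3)$. The RSC step is the easier one. Because $\hat{\bF}^\top\hat{\bU} = 0$ and $n^{-1}\hat{\bF}^\top\hat{\bF} = \bI_K$, the Hessian has block structure that trivialises the $\hat{\bF}$-direction, while the $\hat{\bU}$-block inherits a restricted eigenvalue bound from $\lambda_{\min}(\bSigma)\ge\kappa$ in Assumption~\ref{Assumption_bSigma} combined with Proposition~\ref{Lemma_factor_consistency}, provided $(|\cS_\star|+K)(\log d)^{3/2} = o(n)$. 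Feeding both bounds into the convex-analytic basic inequality for $\ell_1$-penalized $M$-estimators then delivers $\|\hat{\bphi}_h - \tilde{\bphi}\|_2 \lesssim \sqrt{|\cS_\star|+K}\,\lambda$ and hence the claimed $\ell_1$-bound $\lesssim (|\cS_\star|+K)\lambda$ with $\lambda \asymp (\log d/n)^{(\vartheta\wedge 1)/(1+(\vartheta\wedge 1))}$.

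The genuine work is the score bound at $\tilde{\bphi}$, which I would decompose into four pieces: (i) the oracle score $n^{-1}\sum_t \rho_\omega'(\varepsilon_t)\bz_t$ at the true latent variables, handled by the classical adaptive-Huber bias--variance trade-off---$\bbE|\varepsilon|^{1+\vartheta} < \infty$ yields $|\bbE\rho_\omega'(\varepsilon)z_j|\lesssim \omega^{-\vartheta\wedge 1}$ and $|\rho_\omega'|\le\omega$ yields Bernstein fluctuation of order $\sqrt{\omega^2\log d/n}$, which the prescribed $\omega$ balances; (ii) the discrepancy $\hat{\bz}_t - \bH\bz_t$ bounded via Proposition~\ref{Lemma_factor_consistency}; (iii) the perturbation of $\rho_\omega'$ by the factor-approximation residual $[(\bI_n - n^{-1}\hat{\bF}\hat{\bF}^\top)\bF\bvarphi^\star]_t$; and (iv) the associated bias correction. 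The main obstacle lies in (iii): unlike the squared-loss case of Theorem~\ref{Theorem_beta}, the Huber score is bounded and saturates outside $[-\omega,\omega]$, so the factor-approximation residual cannot simply be absorbed into the noise by linearity. My strategy is to condition on the high-probability event that $\max_t|\tilde\varepsilon_t-\varepsilon_t|$ is of smaller order than $\omega$, which is precisely what the two size conditions in~\eqref{eq_Huber_cond_varphi} buy, and then Taylor-expand $\rho_\omega'$ around $\varepsilon_t$ so that the cross term reduces to a linear functional of $\cE$ plus a quadratic remainder in the factor-approximation residual---both controllable under the stated scaling.
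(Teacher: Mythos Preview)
Your proposal is correct and follows essentially the same route as the paper. The paper likewise rewrites the model so that $\tilde{\bphi}$ is the pseudo-true parameter with pseudo-noise $\tilde\varepsilon_t=\varepsilon_t+\be_t^\top(\bI_n-\hat{\bP})\bF\bvarphi^\star$, proves a score bound $\bigl\|n^{-1}\sum_t\psi_\omega(\tilde\varepsilon_t)\hat{\bnu}_t\bigr\|_\infty\lesssim \lambda$ via a three-term split that matches your (i)--(iv) (oracle score $\sum_t\psi_\omega(\varepsilon_t)\bxi_t$, design perturbation $\sum_t\psi_\omega(\varepsilon_t)(\hat{\bxi}_t-\bxi_t)$, and the exact first-order integral expansion $\psi_\omega(\tilde\varepsilon_t)-\psi_\omega(\varepsilon_t)=(\tilde\varepsilon_t-\varepsilon_t)\int_0^1\mathbb{I}\{|\varepsilon_t+s(\tilde\varepsilon_t-\varepsilon_t)|\le\omega\}\,ds$), establishes a localized restricted eigenvalue bound on the same cone $\cC(\cS_\star\cup\{d+1,\ldots,d+K\},3)$, and closes with a deterministic $\ell_1$-consistency lemma. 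Your identification of~\eqref{eq_Huber_cond_varphi} as exactly the condition guaranteeing $\max_t|\tilde\varepsilon_t-\varepsilon_t|=o_\bbP(\omega)$ is the same device the paper uses.

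One caution on your RSC sketch: the exact orthogonality $\hat{\bF}^\top\hat{\bU}=0$ holds for the \emph{unweighted} Gram matrix but does \emph{not} carry over to the Huber Hessian, whose off-diagonal blocks $\sum_t w_t\hat{\bff}_t\hat{\bu}_t^\top$ with $w_t=\mathbb{I}\{|\,\cdot\,|\le\omega\}$ need not vanish. So the $\hat{\bF}$-direction is not ``trivialised'' for free. The paper handles this by first lower-bounding the localized Hessian by $\hat{\bGamma}_n=n^{-1}\sum_t\mathbb{I}\{|\varepsilon_t|\le\omega/3\}\hat{\bnu}_t\hat{\bnu}_t^\top$ (valid on the event $\max_t|\tilde\varepsilon_t-\varepsilon_t|\le\omega/3$ together with a localization radius constraint, which is where both $(|\cS_\star|+K)(\log d)^{3/2}=o(n)$ and the first half of~\eqref{eq_Huber_cond_varphi} enter), and then proving a restricted-eigenvalue bound for $\hat{\bGamma}_n$ directly on the full $(d+K)$-dimensional cone. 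Your conclusion is right, but the block-structure shortcut needs this intermediate reduction.
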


We establish the $\ell_{1}$-statistical rate for our parameters in model~\eqref{factor_aug}[also~\eqref{eq_model_varphi}] by only assuming the existence of $(1+\vartheta)$-th moment of the noise distribution. Specifically, when $\vartheta \geq 1$, the results reduce to the same rates as the sub-Gaussian assumption of $\varepsilon$.  Our result serves as an extension of~\cite{AHR2020} to a more general setting by incorporating latent factors.

\section{Is Factor Regression Model Adequate?}
\label{pcr_test}
The latent factor regression is widely applied in many fields as an efficient dimension reduction method.
A natural question arises is whether the model is adequate and FARM \eqref{factor_aug} serves naturally as the alternative model.  To be more specific, we consider testing the hypotheses
\begin{align}
\label{eq_PCR_test}
    H_0 : \bbeta^{\star} = 0 \enspace \mathrm{versus} \enspace H_1 : \bbeta^{\star} \neq 0
\end{align}
in FARM \eqref{factor_aug}. As the penalized least-squares estimator $\hat\bbeta_{\lambda}$ is used for estimating $\bbeta^{\star}$, it creates biases and make it difficulty for inferences. Thus,  we first introduce a de-biased version of $\hat\bbeta_{\lambda}$ given in \eqref{eq_Lasso}.


\subsection{Bias Correction}
We begin with the construction of bias-corrected estimator for $\bbeta^{\star}$ following similar idea of~\citet{Zhang2014}, \citet{Sara2014} and~\citet{Javanmard2014}. Specifically, let $\hat{\bTheta} \in \bbR^{d \times d}$ be an approximation for the inverse of the Gram matrix $\tilde{\bSigma} = n^{-1} \hat{\bU}^\top \hat{\bU}$, the de-biased estimator for $\bbeta^{\star}$ is then defined as
\begin{align}\label{eq_Debiased}
    \tilde{\bbeta}_{\lambda} = \hbbeta_{\lambda} + \frac{1}{n}\hat{\bTheta} \hat{\bU}^\top (\bY - \hat{\bU} \hbbeta_{\lambda}).
\end{align}
The rationale behind such construction is that we are able to decompose estimation error as
\begin{align}\label{eq_debiased_beta}
    \tilde{\bbeta}_{\lambda} - \bbeta^{\star} = \frac{1}{n} \hat{\bTheta}\hat{\bU}^\top \cE + \frac{1}{n}\hat{\bTheta}\hat{\bU}^\top \bF \bvarphi^{\star} + (\bI_d - \hat{\bTheta}\tilde{\bSigma})(\hbbeta_{\lambda} - \bbeta^{\star}),
\end{align}
after we expand $\bY$ according to~\eqref{eq_model_matrix} and replace $\bX$ by $\bX=\hat\bF\hat\bB+\hat\bU$. The first term on the right hand side of \eqref{eq_debiased_beta} quantifies the uncertainty of our estimator $\tilde{\bbeta}_{\lambda}$ and the last two terms are biases which will be shown to be of smaller order.

One observes that constructing the de-biased estimator $\tilde{\bbeta}_{\lambda}$ given above requires an estimator $\hat\bTheta$. There exist many methodologies devoting to estimating such precision matrix, for example, the node-wise regression proposed in~\citet{Zhang2014} and~\citet{Sara2014}, and the CLIME-type estimator given in~\citet{Cai2011}, ~\citet{Javanmard2014} and ~\citet{avella2018robust}. In our work, we do not restrict $\hat{\bTheta}$ to be any specific one, but require to satisfy the following general conditions.

\begin{assumption}
\label{Assumption_tilde_Theta}
Let $\bTheta = \bSigma^{-1}$ with $\bSigma$ defined in Assumption \ref{Assumption_moment}. There exist positive $\Lambda_{\max}$ and $\Delta_{\infty}$ such that
\begin{align*}
    \|\bI_d - \hat{\bTheta}\tilde{\bSigma}\|_{\max} = O_{\bbP}(\Lambda_{\max}) \enspace \mathrm{and} \enspace \|\hat{\bTheta} - \bTheta\|_{\infty} = O_{\bbP}(\Delta_{\infty}).
\end{align*}
\end{assumption}
Without loss of generality, here we assume that $\Delta_{\infty} \leq \|\bTheta\|_{\infty}$.

\begin{remark}
To give a concrete example, under the mild conditions therein, Assumption~\ref{Assumption_tilde_Theta} is satisfied with
\begin{align*}
    \Lambda_{\max} = O\bigg(\sqrt{\frac{\log d}{n}} + \frac{1}{\sqrt{d}}\bigg) \enspace \mathrm{and} \enspace \Delta_{\infty} = O\bigg(\max_{j \in [d]} |\cS_{j}|\sqrt{\frac{\log d}{n} + \frac{1}{d}}\bigg),
\end{align*}
by using node-wise regression~\citep{Zhang2014, Sara2014}, where $|\cS_{j}| = \sum_{k = 1}^{d} \mathbb{I}\{\Theta_{jk} \neq 0\}$ quantifies the sparsity of $j$-th column of the precision matrix $\bTheta$ for each $1\leq j\leq d$. In Appendix \textcolor{red}{C.1}, we will provide a detailed analysis on estimating $\tilde{\bSigma}^{-1}$ via node-wise regression and establish precise theoretical upper bounds for the statistical rates given in Assumption~\ref{Assumption_tilde_Theta}.
\qed
\end{remark}

\subsection{Gaussian Approximation}
The goal of this section is to derive the asymptotic distribution of $\|\tilde{\bbeta}_{\lambda} - \bbeta^{\star}\|_{\infty}$ in the high dimensional setting. To this end, we apply the Gaussian approximation result given in~\cite{CCK2013, CCK2017, CCK2020} for high dimensional random vectors. More specifically, we let $\bZ = (Z_{1}, \ldots, Z_{d})^{\top} \in \bbR^{d}$ be a zero-mean Gaussian random vector with the same covariance matrix as that of $n^{-1/2} \bTheta \bU^\top \cE$, that is,
\begin{align}
\label{eq_Cov_Z}
    \Cov(\bZ) = \Cov\left(\frac{1}{\sqrt{n}}\bTheta\bU^\top \mathcal{E}\right) = \sigma^2 \bTheta.
\end{align}
We next present the theoretical results on Gaussian approximation of our test statistics under some mild conditions.

\begin{theorem}
\label{Theorem_GA}
Recall $\bvarphi^{\star} = \bgamma^{\star} - \bB^{\top}\bbeta^{\star} \in \bbR^{K}$. We assume that $(\log d)^{5}/n \to 0$,
\begin{align}
\label{eq_GA_cond}
    (\Lambda_{\max} |\cS_{\star}| + \Delta_{\infty}) \log d \to 0 \enspace \mathrm{and} \enspace \left(\cV_{n, d} \|\bvarphi^{\star}\|_{2} + \sqrt{\frac{n}{d}} + \sqrt{\log d}\right)\|\bTheta\|_{\infty} \sqrt{\frac{\log d}{n}} \to 0,
\end{align}
with $\cV_{n, d}$ given by~\eqref{def_cV}. Then under Assumption~\ref{Assumption_tilde_Theta}, we have
\begin{align*}
    \sup_{x > 0}\left|\bbP\left(\sqrt{n}\|\tilde{\bbeta}_\lambda - \bbeta^{\star}\|_\infty \leq x\right) - \bbP\left(\|\bZ\|_\infty \leq x\right)\right| \to 0.
\end{align*}
\end{theorem}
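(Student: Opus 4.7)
The plan is to isolate the dominant Gaussian term in the decomposition \eqref{eq_debiased_beta} and show that everything else is uniformly negligible in $\ell_\infty$. Concretely, I would write
\begin{align*}
\sqrt{n}\bigl(\tilde{\bbeta}_{\lambda} - \bbeta^{\star}\bigr) \;=\; \frac{1}{\sqrt{n}}\bTheta\bU^{\top}\cE \;+\; R_1 \;+\; R_2 \;+\; R_3,
\end{align*}
where $R_3 = \sqrt{n}(\bI_d - \hat{\bTheta}\tilde{\bSigma})(\hat{\bbeta}_{\lambda} - \bbeta^{\star})$ is the inverse-Gram residual, $R_2 = n^{-1/2}\hat{\bTheta}\hat{\bU}^{\top}\bF\bvarphi^{\star}$ is the contamination from the omitted factor direction $\bvarphi^\star$, and $R_1 = n^{-1/2}(\hat{\bTheta}\hat{\bU}^{\top} - \bTheta\bU^{\top})\cE$ collects the perturbation from replacing $(\bTheta,\bU)$ by $(\hat{\bTheta},\hat{\bU})$. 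It suffices to prove $\|R_j\|_\infty = o_{\bbP}(1/\sqrt{\log d})$ for $j=1,2,3$, apply the high-dimensional CLT of Chernozhukov--Chetverikov--Kato to the remaining linear term, and then invoke Nazarov-type Gaussian anti-concentration to upgrade $\ell_\infty$ closeness into Kolmogorov distance closeness (Assumption~\ref{Assumption_bSigma} furnishes the uniform positive lower bound on the diagonal of $\sigma^{2}\bTheta$ needed for anti-concentration).

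For the three remainders I would proceed from easiest to hardest. For $R_3$, Theorem~\ref{Theorem_beta} gives the cone membership $\hat{\bbeta}_\lambda - \bbeta^\star \in \cC(\cS_\star,3)$, hence $\|\hat{\bbeta}_\lambda - \bbeta^\star\|_1 \le 4\sqrt{|\cS_\star|}\|\hat{\bbeta}_\lambda - \bbeta^\star\|_2$, and combining \eqref{eq_bound_bbeta} with Assumption~\ref{Assumption_tilde_Theta} yields $\|R_3\|_\infty \le \sqrt{n}\|\bI_d-\hat{\bTheta}\tilde{\bSigma}\|_{\max}\|\hat{\bbeta}_\lambda - \bbeta^\star\|_1 = O_{\bbP}(\Lambda_{\max}|\cS_\star|\sqrt{\log d})$ plus a $\|\bvarphi^\star\|_2$-dependent piece absorbed by \eqref{eq_GA_cond}. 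For $R_2$, the identity $\hat{\bU}^{\top}\hat{\bF}=0$ (from $\hat{\bU}=(\bI_n-\hat{\bP})\bX$) lets me replace $\bF$ by $\bF-\hat{\bF}\bH^{-\top}$; Cauchy--Schwarz combined with the factor-consistency bound $\|\hat{\bF}-\bF\bH^{\top}\|_{\bbF}^{2}=O_{\bbP}(n/d+1/n)$ from Proposition~\ref{Lemma_factor_consistency} yields $\|\hat{\bU}^{\top}\bF\bvarphi^\star\|_{\max} = O_{\bbP}((n/\sqrt{d}+1)\|\bvarphi^\star\|_2)$; multiplying by $\|\hat{\bTheta}\|_\infty \le \|\bTheta\|_\infty+\Delta_\infty$ and $n^{-1/2}$ matches the second clause of \eqref{eq_GA_cond}. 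For $R_1$, I split further into $n^{-1/2}(\hat{\bTheta}-\bTheta)\hat{\bU}^{\top}\cE$ and $n^{-1/2}\bTheta(\hat{\bU}-\bU)^{\top}\cE$: the first piece combines $\|\hat{\bU}^\top\cE/\sqrt{n}\|_\infty = O_{\bbP}(\sqrt{\log d})$ (sub-Gaussian tail plus union bound) with $\Delta_\infty\log d\to 0$, while the second piece uses the decomposition $\hat{\bU}-\bU = (\bI_n-\hat{\bP})\bF\bB^{\top}-\hat{\bP}\bU$ and controls the two resulting cross terms through the spectral rates in Proposition~\ref{Lemma_factor_consistency}.

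Finally, the leading term $n^{-1/2}\bTheta\bU^{\top}\cE = n^{-1/2}\sum_{t=1}^n \bTheta\bu_t\varepsilon_t$ is an i.i.d.~sum of mean-zero $d$-vectors whose coordinates are sub-exponential (products of sub-Gaussians via Assumptions~\ref{Assumption_UF} and~\ref{Assumption_e}) and whose covariance $\sigma^{2}\bTheta$ has entries bounded by $\|\bTheta\|_\infty$ and a uniform positive lower bound on diagonal entries; the CCK maximum-of-sums CLT then delivers the Gaussian approximation in $\ell_\infty$ precisely under $(\log d)^{5}/n\to 0$. The main obstacle is the cross-term $\bTheta\bU^{\top}\hat{\bP}\cE$ inside $R_1$: the estimated projector $\hat{\bP}$ depends on $\bU$ and is therefore not independent of $\cE$ once one works through the data, so entrywise control in $\ell_\infty$ requires either a conditioning argument exploiting that $\hat{\bP}$ has rank $K$ together with sub-Gaussian concentration on the corresponding $K$-dimensional coordinates, or a spectral comparison of $\hat{\bP}$ with a suitable population projector via Proposition~\ref{Lemma_factor_consistency}. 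Aligning the rate obtained this way with the tight condition $(\cV_{n,d}\|\bvarphi^{\star}\|_{2}+\sqrt{n/d}+\sqrt{\log d})\|\bTheta\|_{\infty}\sqrt{\log d/n}\to 0$ is where the bulk of the technical effort will sit.
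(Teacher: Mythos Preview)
Your overall architecture---isolate $n^{-1/2}\bTheta\bU^{\top}\cE$, kill the remainders in $\ell_\infty$ at rate $o_{\bbP}(1/\sqrt{\log d})$, apply the Chernozhukov--Chetverikov--Kato CLT, and close with Gaussian anti-concentration---is exactly the paper's route, and your treatment of $R_3$ and of the $(\hat{\bTheta}-\bTheta)$-part of $R_1$ is correct.

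There is, however, a quantitative gap in your $R_2$ bound. The Cauchy--Schwarz step $|\hat{\bU}_j^{\top}(\hat{\bF}-\bF\bH^{\top})(\bH^{\top})^{-1}\bvarphi^{\star}|\le \|\hat{\bU}_j\|_2\,\|\hat{\bF}-\bF\bH^{\top}\|_{\bbF}\,\|\bvarphi^{\star}\|_2$ gives $\|\hat{\bU}^{\top}\bF\bvarphi^{\star}\|_\infty=O_{\bbP}\bigl((n/\sqrt{d}+1)\|\bvarphi^{\star}\|_2\bigr)$ as you state, hence $\|R_2\|_\infty\sqrt{\log d}=O_{\bbP}\bigl((\sqrt{n\log d/d}+\sqrt{\log d/n})\|\bTheta\|_\infty\|\bvarphi^{\star}\|_2\bigr)$. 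But the second clause of~\eqref{eq_GA_cond} only gives $\cV_{n,d}\|\bvarphi^{\star}\|_2\|\bTheta\|_\infty\sqrt{\log d/n}\to 0$, whose largest $\bvarphi^{\star}$-term is of order $(\log d/\sqrt{d})\|\bTheta\|_\infty\|\bvarphi^{\star}\|_2$; this is weaker than your required $\sqrt{n\log d/d}\,\|\bTheta\|_\infty\|\bvarphi^{\star}\|_2\to 0$ by a factor $\sqrt{n/\log d}$. So your claim that the Cauchy--Schwarz bound ``matches'' \eqref{eq_GA_cond} is incorrect.

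What is actually needed (and what the paper establishes, as already signalled by the appearance of $\cV_{n,d}$ in Theorem~\ref{Theorem_beta}) is the sharper entrywise rate $\|\hat{\bU}^{\top}\bF\bvarphi^{\star}\|_\infty=O_{\bbP}\bigl(\cV_{n,d}\|\bvarphi^{\star}\|_2\bigr)$. This cannot be obtained from the Frobenius bound $\|\hat{\bF}-\bF\bH^{\top}\|_{\bbF}$ alone; it requires opening up the explicit decomposition of $\hat{\bF}-\bF\bH^{\top}$ (into terms involving $\bU^{\top}\bF$, $\bU\bU^{\top}-\bbE\bU\bU^{\top}$, etc.) and bounding the resulting inner products with $\hat{\bU}_j$ coordinatewise, using the sub-Gaussian concentration of $\bu$ and $\bff$ together with a $\log d$ union bound over $j$. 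Once you replace the coarse Cauchy--Schwarz by this refined analysis, $R_2$ does fall under~\eqref{eq_GA_cond} and the rest of your plan goes through.
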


For any $\alpha\in(0, 1)$, let $c_{1 - \alpha}$
denote the $(1 - \alpha)$-th quantile of the distribution of $\|\bZ\|_\infty$.  Theorem~\ref{Theorem_GA} leads to an approximately level $\alpha$ test for~\eqref{eq_PCR_test} as follows:
\begin{align}
\label{eq_test_psi_infty}
    \psi_{\infty, \alpha} = \mathbb{I}\left\{\sqrt{n}\|\tilde{\bbeta}_\lambda\|_\infty > c_{1 - \alpha}\right\}.
\end{align}


\subsection{Gaussian multiplier bootstrap}
\label{bootstrap}
The critical value $c_{1 - \alpha}$ depends on the unknown $\sigma^2$ and $\bTheta$, which can be estimated by the following Gaussian multiplier bootstrap.  
\begin{enumerate}
    \item Generate i.i.d.~random variables $\xi_{1}, \ldots, \xi_{n} \sim N(0, 1)$ and compute
    \begin{align*}
        \hat{L} = \frac{1}{\sqrt{n}}\|\hat{\bTheta}\hat{\bU}^\top\bxi\|_\infty,  \enspace \mathrm{where} \enspace \bxi = (\xi_1, \xi_2, \ldots, \xi_n)^\top.
    \end{align*}
    \item Repeat the first step independently for $B$ times and obtain $\hat{L}_{1}, \ldots, \hat{L}_{B}$. Estimate the critical value $c_{1 - \alpha}$ via $1-\alpha$ quantile of the empirical distribution of the bootstrap statistics:
    \begin{align*}
        \hat{c}_{1 - \alpha} = \inf\{t \geq 0 : H_{B}(t) \geq 1 - \alpha\}, \enspace \mathrm{where} \enspace H_{B}(t) = \frac{1}{B} \sum_{b = 1}^{B} \mathbb{I}\left\{\hat{L}_{b} \leq t\right\}.
    \end{align*}
\end{enumerate}
Reject the null hypothesis $H_0$ when $\sqrt{n}\|\tilde{\bbeta}_{\lambda}\|_{\infty}/\hat\sigma > \hat{c}_{1 - \alpha}$, for a given consistent estimator
$\hat{\sigma}$ of $\sigma$. To validate the procedure, we need some additional conditions on $\hat{\bTheta}$ and $\hat{\sigma}$.

\begin{assumption}
\label{Assumption_bTheta_max}
	There exists a $\Delta_{\max} > 0$ such that $\|\hat{\bTheta} - \bTheta\|_{\max} = O_{\bbP}(\Delta_{\max})$.
\end{assumption}

\begin{assumption}
\label{Assumption_noise}
	There exists a $0 < \Delta_\sigma \leq 1$ such that $|\hat{\sigma}/\sigma - 1| = O_{\bbP}(\Delta_{\sigma})$.
\end{assumption}

\begin{remark}
The estimation of $\sigma^{2}$ for high dimensional linear regression has been extensively in the literature. For example, \citet{Fan2012variance} proposed refitted cross-validation to construct a consistent estimator with clearly quantified uncertainty of $\hat{\sigma}$ in ultra-high dimension. In addition, \citet{Sun2012scaled} and~\citet{Yu2019estimating} derived scaled-Lasso and organic Lasso respectively for estimating $\sigma$. Like our case of estimating $\bTheta$, we also do not restrict estimating $\sigma$ by any fixed method mentioned above, our theory works as long as the general condition of Assumption~\ref{Assumption_noise} holds.
\qed
\end{remark}


Let $\bbP^{\star}(\cdot) = \bbP(\cdot | \bX, \bY)$ denote the conditional probability. In the following theorem, we establish the validity of the proposed bootstrap procedure.
\begin{theorem}
\label{Theorem_GMB}
Let Assumptions~\ref{Assumption_tilde_Theta}--\ref{Assumption_noise} hold. Assume that
\begin{align}
\label{eq_cond_GMB}
    \Lambda_{\max} \|\bTheta\|_{\infty} + \Delta_{\max}+\Delta_{\sigma} = o\left(\frac{1}{\log d}\right).
\end{align}
Then, under conditions of Theorem~\ref{Theorem_GA}, we have
\begin{align*}
    \sup_{x > 0} \left|\bbP\left(\sqrt{n}\|\tilde{\bbeta}_\lambda - \bbeta^{\star}\|_\infty \leq x\right) - \bbP^{\star}\left(\hat{L} \leq x\right)\right| \overset{\bbP}{\to} 0.
\end{align*}
\end{theorem}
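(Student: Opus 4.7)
The plan is a two-step triangle inequality. Theorem~\ref{Theorem_GA} already delivers that $\sqrt{n}\|\tilde{\bbeta}_\lambda - \bbeta^{\star}\|_\infty$ is Kolmogorov-close to $\|\bZ\|_\infty$ with $\bZ \sim N(0,\sigma^2\bTheta)$, so it suffices to show $\sup_{x > 0}|\bbP^{\star}(\hat{L} \leq x) - \bbP(\|\bZ\|_\infty \leq x)| \overset{\bbP}{\to} 0$. Conditional on the data, $\hat{L}$ is the $\ell_\infty$-norm of a centered Gaussian vector $\hat{\bZ}\in\bbR^d$ with covariance $\hat{\bSigma}_\star = \hat{\sigma}^2\,\hat{\bTheta}\tilde{\bSigma}\hat{\bTheta}^\top$ (the $\hat{\sigma}$ scaling is the one implicit in the bootstrap construction and responsible for the need of Assumption~\ref{Assumption_noise}). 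Comparing two centered Gaussian sup-norms is a textbook application of the Chernozhukov--Chetverikov--Kato Gaussian comparison lemma, which bounds the Kolmogorov distance between $\|\hat{\bZ}\|_\infty$ and $\|\bZ\|_\infty$ by a constant multiple of $\|\hat{\bSigma}_\star - \sigma^2\bTheta\|_{\max}^{1/3}(\log d)^{2/3}$.

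The key step is therefore to control $\|\hat{\bSigma}_\star - \sigma^2\bTheta\|_{\max}$. I would decompose
\begin{align*}
    \hat{\bSigma}_\star - \sigma^2\bTheta = (\hat{\sigma}^2 - \sigma^2)\,\hat{\bTheta}\tilde{\bSigma}\hat{\bTheta}^\top + \sigma^2\,(\hat{\bTheta}\tilde{\bSigma} - \bI_d)\hat{\bTheta}^\top + \sigma^2\,(\hat{\bTheta}^\top - \bTheta),
\end{align*}
using the symmetry identity $\bTheta = \bTheta\bSigma\bTheta^\top$ to telescope the middle term. Applying the elementary inequality $\|AB\|_{\max}\leq \|A\|_{\max}\|B\|_\infty$ together with the bound $\|\hat{\bTheta}\|_\infty \leq \|\bTheta\|_\infty + O_{\bbP}(\Delta_\infty) \lesssim \|\bTheta\|_\infty$ (from Assumption~\ref{Assumption_tilde_Theta}), and invoking Assumptions~\ref{Assumption_tilde_Theta}, \ref{Assumption_bTheta_max} and~\ref{Assumption_noise} for the three summands respectively, gives $\|\hat{\bSigma}_\star - \sigma^2\bTheta\|_{\max} = O_{\bbP}\big(\Lambda_{\max}\|\bTheta\|_\infty + \Delta_{\max} + \Delta_\sigma\big)$. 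Under hypothesis~\eqref{eq_cond_GMB} this is $o_{\bbP}(1/\log d)$, so feeding back into the CCK comparison bound yields the desired $o_{\bbP}(1)$ Kolmogorov distance between the conditional and unconditional Gaussians.

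Finally, I would combine this conditional Gaussian approximation with the unconditional one from Theorem~\ref{Theorem_GA} via the triangle inequality to complete the proof. The main obstacle lies not in any individual estimate but in the careful bookkeeping of the max-norm covariance decomposition---in particular ensuring that the assembled error rate beats the polylog factor $(\log d)^{2/3}$ coming from CCK comparison; condition~\eqref{eq_cond_GMB} is precisely what enables this balance. A secondary technical nuisance is the potential near-singularity of $\hat{\bSigma}_\star$, which can be handled by the standard device of adding an $\epsilon\bI_d$ ridge with $\epsilon \ll 1/(\log d)^{3}$ and invoking CCK anti-concentration to absorb the resulting perturbation.
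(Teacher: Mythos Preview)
Your two-step triangle route through the reference Gaussian $\|\bZ\|_\infty$---first invoking Theorem~\ref{Theorem_GA}, then applying a Gaussian-to-Gaussian comparison lemma to the bootstrap conditional law---is the standard approach and is the one the paper takes. The telescoping decomposition of $\hat{\bSigma}_\star-\sigma^2\bTheta$ into the three pieces controlled by $\Delta_\sigma$, $\Lambda_{\max}\|\bTheta\|_\infty$ and $\Delta_{\max}$ is also correct (the remark about $\bTheta=\bTheta\bSigma\bTheta^\top$ is incidental; what is actually used is just $\bI_d\hat{\bTheta}^\top=\hat{\bTheta}^\top$ and the symmetry of $\bTheta$).

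There is, however, an arithmetic slip in the closing rate check. You obtain $\Delta:=\|\hat{\bSigma}_\star-\sigma^2\bTheta\|_{\max}=o_\bbP(1/\log d)$ and then assert that the CCK bound $C\Delta^{1/3}(\log d)^{2/3}$ is $o_\bbP(1)$; but $(1/\log d)^{1/3}(\log d)^{2/3}=(\log d)^{1/3}$, which diverges. With the version of the comparison inequality you quote, condition~\eqref{eq_cond_GMB} is not strong enough: one would need the left-hand side to be $o((\log d)^{-2})$ rather than $o(1/\log d)$. So either locate and cite a sharper Gaussian comparison bound under which $\Delta=o(1/\log d)$ genuinely suffices, or note explicitly that your argument only closes under the stronger rate. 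Everything else in your outline---including the $\hat\sigma$ bookkeeping and the ridge trick for possible degeneracy of $\hat{\bSigma}_\star$---is sound.
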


\begin{remark}
Following the same de-biasing procedure as given in~\eqref{eq_Debiased}, we are also able to construct entrywise~\citep{Javanmard2014} and groupwise~\citep{Zhang2017, Dezeure2017} simultaneous confidence intervals for $\bbeta^{\star}$.  For each $1\leq j\leq d$, a $(1-\alpha)$-confidence interval for $\beta_{j}^{\star}$ is given by
\begin{align*}
    \mathcal{CI}_{\alpha}(\beta_{j}^{\star}) = \left\{\tilde{\beta}_{j, \lambda} - \hat{\sigma} z_{1 - \alpha/2}\sqrt{\frac{\hat{\Theta}_{jj}}{n}}, \  \tilde{\beta}_{j, \lambda} - \hat{\sigma} z_{1 - \alpha/2}\sqrt{\frac{\hat{\Theta}_{jj}}{n}}\right\},
\end{align*}
where $z_{1 - \alpha/2}$ is the $(1 - \alpha/2)$-th quantile of standard normal distribution. For simultaneous groupwise inference of $\bbeta^{\star}$, let $G$ be a subset of $\{1, \ldots, d\}$  of interest and consider testing the hypotheses
\begin{align*}
    H_{0, G} : \beta_{j}^{\star} = \beta_{j}^{\circ} \enspace \mathrm{for \ all } \enspace j \in G \enspace \mathrm{versus} \enspace H_{1, G} : \beta_{j}^{\star} \neq \beta_{j}^{\circ} \enspace \mathrm{for \ some} \enspace j \in G.
\end{align*}
In particular, when $\beta_{j}^{\circ} = 0$ for all $j \in G$, this reduces to testing the significance of a group of parameters. We obtain that the asymptotic distribution of $\max_{j \in G} \sqrt{n} |\tilde{\beta}_{j,\lambda} - \beta_{j}^{\star}|$ converges to the distribution of $\max_{j \in G} |Z_{j}|$ by leveraging the Gaussian approximation. The remaining steps follow directly by conducting the Gaussian multiplier bootstrap.
\qed
\end{remark}

\section{Is Sparse Linear Model Adequate?}
\label{sparse_test}
Sparse linear regression, which serves as the backbone of high dimensional statistics, has been widely applied in many areas of science, engineering, and social sciences. However, its adequacy has never been validated.  This section focuses on testing the adequacy of the sparse linear model. 

\subsection{Main Results}
As mentioned in introduction, the proposed model~\eqref{eq_model_varphi} contains the sparse linear regression model as a special case. Thus,  we consider testing the hypotheses
\begin{align}
\label{Test_sparse_model}
    H_{0} : Y = \bx^{\top} \bbeta^{\star} + \varepsilon \enspace \mathrm{versus} \enspace H_{1} : Y = \bff^{\top}\bvarphi^{\star} + \bx^{\top}\bbeta^{\star} + \varepsilon,
\end{align}
which is equivalent to test whether $\bvarphi^{\star}= \bgamma^{\star} - \bB^{\top} \bbeta^{\star}= 0$. Since $\bB$ is an unknown dense matrix, simultaneously testing this linear equation will suffer from the curse of dimensionality.

On the other hand,  for any set $\cS \subset [d]$ with $\cS_{\star} \subset \cS$, we have $\bB_{\cS}^\top \bbeta_{\cS}^{\star} = \bB^\top \bbeta^{\star}$. Hence, it suffices to compare the following two linear models in reduced dimension:
\begin{align}
\label{Test_XY}
    H_{0} : Y = \bx_{\cS}^{\top} \bbeta_{\cS}^{\star} + \varepsilon \enspace \mathrm{versus} \enspace H_{1} : Y = \bff^{\top} \bvarphi^{\star} + \bx_{\cS}^{\top} \bbeta_{\cS}^{\star} + \varepsilon.
\end{align}
This hinges applying a sure screening method to reduce the dimensionality.  There exist several methods which lead to the sure screening property.
Among those, the commonly used one is the marginal screening method~\citep{Fan2008sure,  FanSong2010, 
Zhu2011model, Li2012robust, 
Liu2014feature, Barut2016conditional, Chu2016feature, Wang2016high}. 

We propose an ANOVA-type test for \eqref{Test_sparse_model} with two stages. In the first stage, the data set is split into two data sets $(\bY^{(1)}, \bX^{(1)})$ and $(\bY^{(2)}, \bX^{(2)})$, with sample sizes $m$ and $n - m$, respectively. We use $(\bY^{(1)}, \bX^{(1)})$ to screen variables. Let $\hat{\cS}_1$ denote the set of variables selected. In the second stage, we leverage the selected $\hat{\cS}_1$ and remaining data $(\bY^{(2)}, \bX^{(2)})$ to perform hypothesis testing based on the ANOVA-type test statistic for low-dimensional model \eqref{Test_XY} with $\cS$ replaced by $\hat \cS_1$.  As the first step is based on marginal screening and is relatively crude, the sample size $m$ is relatively small in comparing with the second step.  We impose a general assumption on the set $\hat S_1$.
\begin{assumption}[Sure screening property]
\label{Assumption_sure_screening}
There exists a $s_n > 0$ such that
\begin{align*}
    \bbP\left(|\hat{\cS}_{1}| \leq s_{n} \enspace \mathrm{and} \enspace \cS_{\star} \subset \hat{\cS}_1\right) \to 1, \enspace \mathrm{as} \enspace n \to \infty.
\end{align*}
\end{assumption}

A simple procedure that satisfies the above assumption is the follow factor-adjusted marginal screening based on the data
$(\bY^{(1)}, \bX^{(1)})$.
\begin{enumerate}
    \item Estimation. Compute the latent factor estimator $\hat{\bF}^{(1)}$,  idiosyncratic component $\hat\bU^{(1)}$ based on $\bX^{(1)}$, and $\tilde{\bY}^{(1)}=(\bI_{m}-\hat \bF^{(1)}(\hat\bF^{(1)\top}\hat\bF^{(1)})^{-1}\hat\bF^{(1)\top})\bY^{(1)}$.
    \item Marginal regression. Compute the least square estimate $\hat{\beta}_{\ell, M} = \hat{\bU}_{\ell}^{(1)\top} \tilde{\bY}^{(1)} /(\hat{\bU}_{\ell}^{(1)\top} \hat{\bU}_{\ell}^{(1)})$ for each $1\leq \ell \leq d$.
   \item Screening. Let $\hat S_1:=\hat{\cS}_{\phi} = \{\ell \in [d]: |\hat{\beta}_{\ell, M}| > \phi\}$ for some prescribed $\phi > 0$.
\end{enumerate}
Here $\hat{\bU}_{\ell}^{(1)} \in \bbR^{d}$ stands for the $\ell$-th column of the matrix $\hat{\bU}^{(1)}$. We next provide a sufficient condition for the Assumption~\ref{Assumption_sure_screening} to hold.

\begin{proposition}
\label{Lemma_Sure_Screening}
Assume that $m = o(d\log d)$ and there exists some positive constant $\bar{c} < 1$ such that
\begin{align}
\label{eq_Sure_Screening_cond}
    \phi < \frac{1}{1 + \bar{c}}\min_{\ell \in [d]}\frac{\bSigma_{\ell}^{\top} \bbeta^{\star}}{\Sigma_{\ell\ell}} \enspace \mathrm{and} \enspace \frac{\cV_{m, d}}{m} \|\bvarphi^{\star}\|_{2} + \|\bbeta^{\star}\|_{1} \frac{\log d}{m} + \|\bbeta^{\star}\|_{2} \sqrt{\frac{\log d}{m}} = o(\phi).
\end{align}
Here $\bSigma_{\ell}$ denotes the $\ell$-th column of $\bSigma$. Then, under the Assumptions~\ref{Assumption_UF}--\ref{Assumption_e}, we have
\begin{align*}
    \bbP\left(\cS_{\star} \subset \hat{\cS}_{\phi}\right) \to 1, \enspace \mathrm{as} \enspace m \to \infty.
\end{align*}
Furthermore, we assume that $\min_{\ell \in \cS_{\star}} |\beta_{\ell, M}^{\star}:={\bSigma_{\ell}^{\top} \bbeta^{\star}}/{\Sigma_{\ell\ell}} | \geq c_{\star} m^{-\kappa}$ for some positive constant $\kappa < 1/2$. Then for any $\phi = c_{\diamond} m^{-\kappa}$ with $c_{\diamond}\leq c_{\star}/(1 + \bar{c})$, we have
\begin{align*}
    \bbP\left\{|\hat{\cS}_{\phi}| \leq \frac{c_{\diamond}^{2} m^{2\kappa}\|\bSigma\bbeta^{\star}\|_{2}^{2}}{\lambda_{\min}^{2}(\bSigma)(1 - \bar{c})^{2}}\right\} \to 1 \enspace \mathrm{as} \enspace m \to \infty.
\end{align*}
\end{proposition}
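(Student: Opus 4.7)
The plan is to establish the uniform bound $\max_{\ell\in[d]}|\hat\beta_{\ell,M} - \beta_{\ell,M}^\star| = o_\bbP(\phi)$, where $\beta_{\ell,M}^\star = \bSigma_\ell^\top\bbeta^\star/\Sigma_{\ell\ell}$ is the population marginal slope, and then to deduce the two conclusions by elementary thresholding arguments. The form of $\beta_{\ell,M}^\star$ follows because, under~\eqref{factor_aug} with $(\bff,\bu)$ uncorrelated and $\varepsilon$ independent of $(\bff,\bu)$, one has $\bbE[u_\ell Y]=\bSigma_\ell^\top\bbeta^\star$ and $\bbE[u_\ell^2]=\Sigma_{\ell\ell}$.

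The core decomposition starts from the equivalent form $\bY=\bF\bvarphi^\star+\bX\bbeta^\star+\cE$ together with $\hat\bU^{(1)}=(\bI_m-\hat\bP^{(1)})\bX^{(1)}$, whence $\hat\bU_\ell^{(1)\top}\hat\bF^{(1)}=0$ and $\hat\bU_\ell^{(1)\top}\tilde\bY^{(1)}=\hat\bU_\ell^{(1)\top}\bY^{(1)}$; plugging $\bX^{(1)}=\hat\bF^{(1)}\hat\bB^{(1)\top}+\hat\bU^{(1)}$ into the second factor gives
\begin{align*}
    \hat\bU_\ell^{(1)\top}\tilde\bY^{(1)} = \hat\bU_\ell^{(1)\top}\bF^{(1)}\bvarphi^\star + \hat\bU_\ell^{(1)\top}\hat\bU^{(1)}\bbeta^\star + \hat\bU_\ell^{(1)\top}\cE^{(1)}.
\end{align*}
Dividing by $\hat\bU_\ell^{(1)\top}\hat\bU_\ell^{(1)}$ and subtracting $\beta_{\ell,M}^\star$ isolates four pieces that must be controlled uniformly in $\ell$: a factor-spillover term involving $\bvarphi^\star$, an idiosyncratic cross-term, a noise term, and a denominator fluctuation.

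Each piece is then bounded uniformly over $\ell\in[d]$ using Proposition~\ref{Lemma_factor_consistency} and sub-Gaussian maximal inequalities. For the factor-spillover term I would use the identity $(\bI_m-\hat\bP^{(1)})\bF^{(1)}=-(\bI_m-\hat\bP^{(1)})(\hat\bF^{(1)}-\bF^{(1)}\bH^\top)\bH^{-\top}$ together with the rates $\|\hat\bF-\bF\bH^\top\|_\bbF^2=O_\bbP(m/d+1/m)$ and $\max_\ell\|\hat\bU_\ell^{(1)}-\bU_\ell^{(1)}\|_2^2=O_\bbP(\log d+m/d)$ to conclude $m^{-1}\max_\ell|\hat\bU_\ell^{(1)\top}\bF^{(1)}\bvarphi^\star|=O_\bbP(\cV_{m,d}\|\bvarphi^\star\|_2/m)$. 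The idiosyncratic cross-term is split as $\hat\bU_\ell^{(1)\top}(\hat\bU^{(1)}-\bU^{(1)})\bbeta^\star+(\hat\bU_\ell^{(1)}-\bU_\ell^{(1)})^\top\bU^{(1)}\bbeta^\star+(\bU_\ell^{(1)\top}\bU^{(1)}\bbeta^\star-m\bSigma_\ell^\top\bbeta^\star)$; the first two summands are handled by Cauchy-Schwarz plus the column-wise rate above (producing the $\|\bbeta^\star\|_1(\log d)/m$ contribution), while the last is a union bound over sub-Gaussian quadratic forms with rate $O_\bbP(\|\bbeta^\star\|_2\sqrt{\log d/m})$. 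The noise term and the denominator are controlled by Bernstein-type concentration, with Assumption~\ref{Assumption_bSigma} yielding $\Sigma_{\ell\ell}\geq\lambda_{\min}(\bSigma)\geq\kappa$ so that the denominator is bounded away from zero. Aggregating these bounds reproduces exactly the rate in condition~\eqref{eq_Sure_Screening_cond}, so $\max_\ell|\hat\beta_{\ell,M}-\beta_{\ell,M}^\star|=o_\bbP(\phi)$.

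Given this, the first conclusion is immediate: for $\ell\in\cS_\star$ the signal hypothesis gives $|\beta_{\ell,M}^\star|>(1+\bar c)\phi$, so the reverse triangle inequality forces $|\hat\beta_{\ell,M}|>\phi$ with probability tending to one, proving $\cS_\star\subset\hat\cS_\phi$. For the cardinality bound, any $\ell\in\hat\cS_\phi$ satisfies $|\beta_{\ell,M}^\star|\geq(1-\bar c)\phi$ with probability tending to one, whence
\begin{align*}
    |\hat\cS_\phi|\,(1-\bar c)^2\phi^2 \leq \sum_{\ell=1}^d (\beta_{\ell,M}^\star)^2 \leq \frac{\|\bSigma\bbeta^\star\|_2^2}{\lambda_{\min}^2(\bSigma)},
\end{align*}
and substituting $\phi=c_\diamond m^{-\kappa}$ yields the stated bound. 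The hard step will be controlling the factor-spillover term uniformly in $\ell$: here the principal-component error in $\hat\bF^{(1)}$ couples with the unknown direction $\bvarphi^\star$ and must be balanced against the idiosyncratic rate through a careful combination of the two bounds in Proposition~\ref{Lemma_factor_consistency}; this is precisely the origin of the quantity $\cV_{m,d}$ in condition~\eqref{eq_Sure_Screening_cond}.
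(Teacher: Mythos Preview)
Your overall strategy---establish the uniform deviation $\max_{\ell\in[d]}|\hat\beta_{\ell,M}-\beta_{\ell,M}^\star|=o_\bbP(\phi)$ via the four-piece decomposition and then read off both conclusions by thresholding---is correct and is the approach the paper takes. The identity $(\bI_m-\hat\bP)\bF=-(\bI_m-\hat\bP)(\hat\bF-\bF\bH^\top)\bH^{-\top}$ for the spillover term and the counting argument $|\hat\cS_\phi|(1-\bar c)^2\phi^2\leq\sum_\ell(\beta_{\ell,M}^\star)^2\leq\|\bSigma\bbeta^\star\|_2^2/\lambda_{\min}^2(\bSigma)$ are exactly right.

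There is one step that, as written, does not deliver the rate you claim. For the first summand $\hat\bU_\ell^{(1)\top}(\hat\bU^{(1)}-\bU^{(1)})\bbeta^\star$, a naive Cauchy--Schwarz with the column-wise rate from Proposition~\ref{Lemma_factor_consistency} gives only
\[
m^{-1}\|\hat\bU_\ell^{(1)}\|_2\,\|(\hat\bU^{(1)}-\bU^{(1)})\bbeta^\star\|_2
=O_\bbP\!\left(\|\bbeta^\star\|_1\sqrt{(\log|\cS_\star|)/m}\right),
\]
which is generically \emph{larger} than the $\|\bbeta^\star\|_1(\log d)/m$ term in condition~\eqref{eq_Sure_Screening_cond} and hence not controlled by it. The repair is to reuse the orthogonality you already exploited: since $\hat\bU^{(1)}-\bU^{(1)}=(\bI_m-\hat\bP^{(1)})\bF^{(1)}\bB^\top-\hat\bP^{(1)}\bU^{(1)}$ and $\hat\bU_\ell^{(1)\top}\hat\bP^{(1)}=0$, one has the exact identity
\[
\hat\bU_\ell^{(1)\top}(\hat\bU^{(1)}-\bU^{(1)})\bbeta^\star=\hat\bU_\ell^{(1)\top}\bF^{(1)}\bB^\top\bbeta^\star,
\]
i.e.\ this summand is the factor-spillover term with $\bvarphi^\star$ replaced by $\bB^\top\bbeta^\star$. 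Your own spillover analysis then yields $m^{-1}\max_\ell|\hat\bU_\ell^{(1)\top}\bF^{(1)}\bB^\top\bbeta^\star|=O_\bbP(\cV_{m,d}\|\bB^\top\bbeta^\star\|_2/m)=O_\bbP(\cV_{m,d}\|\bbeta^\star\|_1/m)$, and under the hypothesis $m=o(d\log d)$ one checks $\cV_{m,d}/m=o((\log d)/m)$, so the term is indeed absorbed by $\|\bbeta^\star\|_1(\log d)/m$ in~\eqref{eq_Sure_Screening_cond}. With this correction the rest of your argument goes through verbatim.
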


\begin{remark}
From the conclusion of Proposition~\ref{Lemma_Sure_Screening}, we obtain sure screening property by using our first data set with sample size $m=n^{\alpha}$ for some $\alpha<1$ as long as the signal satisfies $\min_{\ell \in \cS_{\star}} |\beta_{\ell, M}^{\star}| \geq c_{\star} m^{-\kappa}$. Thus, the size of the remaining data set for constructing the test statistic in our second step is $n - n^{\alpha}\approx n$. It is worth to note that this does not lose any efficiency in terms of the asymptotic power in our hypothesis test when $n$ goes to infinity.
\qed
\end{remark}

\begin{remark}
\citet{Fan2020factor} proposed a similar sure screening estimator which is a special case of our Proposition~\ref{Lemma_Sure_Screening} with $\bvarphi^{\star}= \bgamma^{\star} - \bB^{\top} \bbeta^{\star}= 0$.  Moreover, we also provide an upper bound for the number of selected variables  whereas \citet{Fan2020factor} only provided a sufficient condition for the sure screening property.
\qed
\end{remark}


Next, we proceed to the second stage of our hypothesis testing. In this step,  we construct an ANOVA test statistic for (\ref{Test_XY}) with $\cS$ replaced by $\hat \cS_1$, which is given by
\begin{align}
\label{Q_n2}
    Q_n^{(2)} = \left\|\left(\bI_{n-m} - \bP_{\bX_{\hat{\cS}_1}^{(2)}}\right)\bY^{(2)}\right\|_2^2 - \left\|\left(\bI_{n-m} - \bP_{\hat{\bF}^{(2)}} -  \bP_{\hat{\bU}_{\hat{\cS}_1}^{(2)}}\right)\bY^{(2)}\right\|_2^2.
\end{align}
We then summarize our results on the asymptotic behaviors of $Q_n^{(2)}$ in the following Theorem~\ref{Theorem_Chi_test}.

\begin{theorem}\label{Theorem_Chi_test}
Let Assumptions~\ref{Assumption_UF}--\ref{Assumption_e} and Assumption~\ref{Assumption_sure_screening} hold with
\begin{align*}
    s_n\left(\frac{\log d}{n} + \frac{1}{d}\right) \to 0 \enspace \mathrm{and} \enspace \Delta_{\sigma} \to 0.
\end{align*}
We obtain
\begin{align*}
    \sup_{x > 0} \left|\bbP\left({Q_{n}^{(2)}} \leq x {\hat{\sigma}^2} | H_{0}\right) - \bbP(\chi_{K}^2 \leq x)\right| \to 0, \enspace \mathrm{as} \enspace n \to \infty.
\end{align*}
\end{theorem}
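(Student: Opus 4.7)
My plan is to recognize $Q_n^{(2)}$ as a quadratic form in the noise acting through a rank-$K$ orthogonal projection, and then invoke a finite-dimensional central limit theorem. First I would restrict attention to the sure-screening event $\cG_n := \{\cS_\star \subset \hat\cS_1,\ |\hat\cS_1|\le s_n\}$, which has probability tending to one by Assumption~\ref{Assumption_sure_screening}. Under $H_0$ and on $\cG_n$ we have $\bY^{(2)} = \bX_{\hat\cS_1}^{(2)}\bbeta^\star_{\hat\cS_1} + \cE^{(2)}$. Let $S = \mathrm{col}(\bX_{\hat\cS_1}^{(2)})$ and $T = \mathrm{col}(\hat\bF^{(2)}) + \mathrm{col}(\hat\bU_{\hat\cS_1}^{(2)})$. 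The identity $\bX_{\hat\cS_1}^{(2)} = \hat\bF^{(2)}\hat\bB_{\hat\cS_1}^\top + \hat\bU_{\hat\cS_1}^{(2)}$ gives $S \subset T$, while the PCA orthogonality $\hat\bF^{(2)\top}\hat\bU^{(2)} = 0$ yields $\bP_{\hat\bF^{(2)}} + \bP_{\hat\bU_{\hat\cS_1}^{(2)}} = \bP_T$. The Pythagorean identity for nested projections then delivers $Q_n^{(2)} = \|(\bP_T - \bP_S)\bY^{(2)}\|_2^2 = \|(\bP_T - \bP_S)\cE^{(2)}\|_2^2$, since $\bX_{\hat\cS_1}^{(2)}\bbeta^\star_{\hat\cS_1} \in S$ is annihilated by $\bP_T - \bP_S$.

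Generically $\dim T - \dim S = K$, so $\bP_T - \bP_S$ is a rank-$K$ orthogonal projector. Let $\bG \in \bbR^{(n-m)\times K}$ be any matrix with orthonormal columns spanning its range, so that $Q_n^{(2)} = \|\bG^\top \cE^{(2)}\|_2^2$. Data splitting makes $\hat\cS_1$ independent of $(\bX^{(2)},\cE^{(2)})$, so conditional on $\bG$ (a function of $\hat\cS_1$ and $\bX^{(2)}$) the vector $\bG^\top \cE^{(2)} = \sum_{t = m+1}^{n}\bg_t \varepsilon_t$ is a sum of independent mean-zero sub-Gaussian random vectors with conditional covariance $\sigma^2 \bG^\top\bG = \sigma^2 \bI_K$. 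Provided $\max_{m<t\le n}\|\bg_t\|_2 = o_{\bbP}(1)$, the multivariate Lindeberg--Feller CLT gives $\sigma^{-1}\bG^\top \cE^{(2)} \Rightarrow N(\bm{0},\bI_K)$ conditionally, and a dominated convergence argument on characteristic functions lifts this to an unconditional statement because the limit does not depend on $\bG$. Continuous mapping then yields $Q_n^{(2)}/\sigma^2 \Rightarrow \chi_K^2$, and Slutsky combined with Assumption~\ref{Assumption_noise} ($\Delta_\sigma \to 0$) upgrades this to $Q_n^{(2)}/\hat\sigma^2 \Rightarrow \chi_K^2$. Uniform convergence of the cumulative distribution functions in $x$ then follows from P\'olya's theorem, since the $\chi_K^2$ limit is absolutely continuous.

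The principal technical obstacle is to verify the negligibility condition $\max_t \|\bg_t\|_2 = o_{\bbP}(1)$ uniformly over the random set $\hat\cS_1$. Using the representation $\bG\bG^\top = (\bI - \bP_S)\hat\bF^{(2)}\big[\hat\bF^{(2)\top}(\bI - \bP_S)\hat\bF^{(2)}\big]^{-1}\hat\bF^{(2)\top}(\bI - \bP_S)$, one gets the row-norm bound $\|\bg_t\|_2^2 \le \|\hat\bff_t^{(2)}\|_2^2 / \lambda_{\min}\big(\hat\bF^{(2)\top}(\bI - \bP_S)\hat\bF^{(2)}\big)$. Expanding $\bP_S$ through $\bX_{\hat\cS_1}^{(2)} = \hat\bF^{(2)}\hat\bB_{\hat\cS_1}^\top + \hat\bU_{\hat\cS_1}^{(2)}$ and applying Proposition~\ref{Lemma_factor_consistency} together with the rate condition $s_n(\log d/n + 1/d) \to 0$ would show that the denominator is of order $n - m$ with high probability, uniformly over $\hat\cS_1$ on $\cG_n$, because the PCA-induced orthogonality makes the leading term of $\hat\bF^{(2)\top}(\bI - \bP_S)\hat\bF^{(2)}$ equal to $(n-m)\bI_K$ up to a $O_{\bbP}(s_n(\log d/n + 1/d))$ perturbation. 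A standard sub-Gaussian maximal inequality, together with the $\|\hat\bff_t^{(2)} - \bH\bff_t^{(2)}\|_2$ control from Proposition~\ref{Lemma_factor_consistency}, yields $\max_t \|\hat\bff_t^{(2)}\|_2^2 = O_{\bbP}(\log n)$; combining the two bounds gives $\max_t \|\bg_t\|_2^2 = O_{\bbP}(\log n / n)$, which closes the argument.
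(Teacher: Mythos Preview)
Your global architecture is the natural one and almost certainly coincides with the paper's: identify $Q_n^{(2)}=\|(\bP_T-\bP_S)\cE^{(2)}\|_2^2$ as a rank-$K$ projection of the noise (using $S\subset T$ and $\hat\bF^{(2)\top}\hat\bU^{(2)}=0$), then apply a conditional Lindeberg CLT to $\bG^\top\cE^{(2)}$, followed by Slutsky and P\'olya. The representation $\bP_T-\bP_S=(\bI-\bP_S)\hat\bF^{(2)}M^{-1}\hat\bF^{(2)\top}(\bI-\bP_S)$ with $M=\hat\bF^{(2)\top}(\bI-\bP_S)\hat\bF^{(2)}$ is also correct.

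The verification of $\max_t\|\bg_t\|_2=o_\bbP(1)$, however, contains a genuine error. Your claimed row bound $\|\bg_t\|_2^2\le\|\hat\bff_t^{(2)}\|_2^2/\lambda_{\min}(M)$ silently replaces $\|\hat\bF^{(2)\top}(\bI-\bP_S)\be_t\|_2$ by $\|\hat\bF^{(2)\top}\be_t\|_2=\|\hat\bff_t^{(2)}\|_2$, which is not justified. More importantly, the assertion that $M=(n-m)\bI_K$ up to an $O_\bbP(s_n(\log d/n+1/d))$ perturbation is false: PCA gives $\hat\bF^{(2)\top}\hat\bU^{(2)}=0$, \emph{not} $\hat\bF^{(2)\top}\bX_{\hat\cS_1}^{(2)}=0$. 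Indeed $\hat\bF^{(2)\top}\bX_{\hat\cS_1}^{(2)}=(n-m)\hat\bB_{\hat\cS_1}^\top$, and a Woodbury computation yields
\[
M=(n-m)\bigl(\bI_K+\bC\bigr)^{-1},\qquad \bC=(n-m)\,\hat\bB_{\hat\cS_1}^\top\bigl(\hat\bU_{\hat\cS_1}^{(2)\top}\hat\bU_{\hat\cS_1}^{(2)}\bigr)^{-1}\hat\bB_{\hat\cS_1},
\]
with $\|\bC\|_2$ of order $K s_n\|\bB\|_{\max}^2/\lambda_{\min}(\bSigma)$, so $\lambda_{\min}(M)\asymp (n-m)/(1+s_n)$ rather than $n-m$. (This factor $1+Ks_n\|\bB\|_{\max}^2/\lambda_{\min}(\bSigma)$ is exactly what appears in the power set $\cD(\alpha,\theta)$ of Theorem~\ref{Theorem_ANOVA_power}, confirming that it is not negligible.)

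The Lindeberg condition can be rescued by a cruder but correct bound: $\|\bg_t\|_2^2=[\bP_T-\bP_S]_{tt}\le[\bP_T]_{tt}=[\bP_{\hat\bF^{(2)}}]_{tt}+[\bP_{\hat\bU_{\hat\cS_1}^{(2)}}]_{tt}$. The first leverage is $(n-m)^{-1}\|\hat\bff_t^{(2)}\|_2^2=O_\bbP(\log n/(n-m))$; the second is $\hat\bu_{t,\hat\cS_1}^\top(\hat\bU_{\hat\cS_1}^{(2)\top}\hat\bU_{\hat\cS_1}^{(2)})^{-1}\hat\bu_{t,\hat\cS_1}=O_\bbP(s_n\log n/(n-m))$ by a standard sub-Gaussian leverage bound combined with Proposition~\ref{Lemma_factor_consistency}. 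Hence $\max_t\|\bg_t\|_2^2=O_\bbP(s_n\log n/n)=o_\bbP(1)$ under the hypothesis $s_n\log d/n\to0$, which is enough for your CLT step.
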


Theorem~\ref{Theorem_Chi_test} yields a level $\alpha$ test for~\eqref{Test_sparse_model} with critical region
$\left\{Q_n^{(2)} > \hat{\sigma}^2 \chi_{K, 1-\alpha}^2\right\}$,
where $\chi_{K, 1-\alpha}^2$ is the $(1-\alpha)$-th quantile of $\chi_K^2$-distribution. 

\begin{remark}
Under stronger conditions such as irrepresentable condition~\citep{ZhaoYu2006} or RIP condition~\citep{CandesTao2007}, the $\hat{\cS}$ achieved by certain explicit regularization~\citep{ZhaoYu2006, FanLv2011, Shi2019, Fan2020factor} or implicit regularization accompanied with early stopping and signal truncation~\citep{Zhao2019, Yu2021} enjoys variable selection consistency $\bbP(\hat{\cS} = \cS_{\star}) \to 1$. In this scenario, we take the test statistic as
\begin{align*}
    Q_{n} = \left\|\left(\bP_{\hat{\bF}} + \bP_{\hat{\bU}_{\hat{\cS}}} - \bP_{\bX_{\hat{\cS}}}\right)\bY\right\|_2^{2}
\end{align*}
without using sample splitting.
Under Assumptions~\ref{Assumption_UF}--\ref{Assumption_e}, we obtain
\begin{align}
\label{eq_Qn_distribution}
    \sup_{x > 0} \left|\bbP\left({Q_{n}} \leq x {\hat{\sigma}^{2}} | H_{0}\right) - \bbP(\chi_{K}^{2} \leq x)\right| \to 0,
\end{align}
by following similar proof idea with Theorem \ref{Theorem_Chi_test}.
\qed
\end{remark}

We now present  the power of the test statistic~\eqref{Q_n2}.

\begin{theorem}
\label{Theorem_ANOVA_power}
Define
\begin{align*}
    \cD(\alpha, \theta) = \left\{\bvarphi \in \bbR^{K}: \frac{n\|\bvarphi\|^{2}}{1 + K s_{n} \|\bB\|_{\max}^{2}/\lambda_{\min}(\bSigma)} \geq \sigma^{2} (2 + \delta) (\chi_{K, 1 - \alpha}^{2} + \chi_{K, 1 - \theta}^{2})\right\},
\end{align*}
where $\delta > 0$ is some constant, $s_n$ is the size of selected set from the first stage and $K$ is the number of factors. Assume that 
\begin{align}\label{eq_power_cond}
\|\bvarphi^{\star}\|_{2} \left(\sqrt{n/d} + 1/\sqrt{n}\right) \to 0. 
\end{align}
  Then, under the conditions of Theorem~\ref{Theorem_Chi_test}, we have
\begin{align*}
    \inf_{\bvarphi^{\star} \in \cD(\alpha, \theta)} \bbP(\psi_{\alpha} = 1 | H_{1}) \geq 1 - \theta.
\end{align*}
\end{theorem}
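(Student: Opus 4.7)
The plan is to combine a deterministic lower bound for the signal in $Q_n^{(2)}$ with a $\chi_K^2$-type upper bound for the noise that together match the structure of $\cD(\alpha,\theta)$. On the sure-screening event (of probability $\to 1$ by Assumption~\ref{Assumption_sure_screening}), under $H_1$ we have $\bY^{(2)} = \bF^{(2)}\bvarphi^{\star} + \bX_{\hat{\cS}_1}^{(2)}\bbeta^{\star}_{\hat{\cS}_1} + \cE^{(2)}$. Writing $\bP_1 = \bP_{\bX_{\hat{\cS}_1}^{(2)}}$ and $\bP_2 = \bP_{\hat{\bF}^{(2)}} + \bP_{\hat{\bU}^{(2)}_{\hat{\cS}_1}}$, the PCA orthogonality $\hat{\bF}^{(2)\top}\hat{\bU}^{(2)} = 0$ makes $\bP_2$ itself an orthogonal projection, and the identity $\bX^{(2)}_{\hat{\cS}_1} = \hat{\bF}^{(2)}\hat{\bB}^{(2)\top}_{\hat{\cS}_1} + \hat{\bU}^{(2)}_{\hat{\cS}_1}$ places $\mathrm{range}(\bP_1)\subset\mathrm{range}(\bP_2)$, so $\bM := \bP_2 - \bP_1$ is an orthogonal projection of rank exactly $K$ annihilating $\bX^{(2)}_{\hat{\cS}_1}$. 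Therefore $Q_n^{(2)} = \|\bM\bF^{(2)}\bvarphi^{\star} + \bM\cE^{(2)}\|_2^{2}$.

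To lower-bound the signal $S := \|\bM\bF^{(2)}\bvarphi^{\star}\|_2$, I would first replace $\bF^{(2)}\bvarphi^{\star}$ by $\hat{\bF}^{(2)}\bH^{-\top}\bvarphi^{\star}$ via Proposition~\ref{Lemma_factor_consistency}, incurring an additive error of order $O_{\bbP}((\sqrt{n/d} + 1/\sqrt{n})\|\bvarphi^{\star}\|_{2}) = o_{\bbP}(1)$ by \eqref{eq_power_cond}. Since $\hat{\bF}^{(2)}\bH^{-\top}\bvarphi^{\star}$ lies in $\mathrm{range}(\bP_2)$, $\|\bM\hat{\bF}^{(2)}\bH^{-\top}\bvarphi^{\star}\|_2^{2}$ equals the squared distance from $\hat{\bF}^{(2)}\bH^{-\top}\bvarphi^{\star}$ to $\mathrm{range}(\bX_{\hat{\cS}_1}^{(2)})$. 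Expanding $\|\hat{\bF}^{(2)}\bH^{-\top}\bvarphi^{\star} - \bX_{\hat{\cS}_1}^{(2)}\ba\|_2^{2}$ using $\hat{\bF}^{(2)\top}\hat{\bF}^{(2)} = (n-m)\bI_K$ and $\hat{\bF}^{(2)\top}\hat{\bU}^{(2)}_{\hat{\cS}_1} = 0$ decouples it into $(n-m)\|\bH^{-\top}\bvarphi^{\star} - \hat{\bB}^{(2)\top}_{\hat{\cS}_1}\ba\|_2^{2} + \ba^{\top}(\hat{\bU}^{(2)}_{\hat{\cS}_1})^{\top}\hat{\bU}^{(2)}_{\hat{\cS}_1}\ba$. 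Minimising over $\ba$ and applying the Woodbury identity, then using $\bH\bH^{\top} \approx \bI_K$, $\hat{\bB}^{(2)} \approx \bB\bH^{\top}$, and concentration of $(\hat{\bU}^{(2)}_{\hat{\cS}_1})^{\top}\hat{\bU}^{(2)}_{\hat{\cS}_1}/(n-m)$ to $\bSigma_{\hat{\cS}_1,\hat{\cS}_1}$, yields
\begin{align*}
S^{2} \geq (1 - o_{\bbP}(1))\,(n - m)\,\bvarphi^{\star\top}\bigl(\bI_K + \bB_{\hat{\cS}_1}^{\top}\bSigma_{\hat{\cS}_1,\hat{\cS}_1}^{-1}\bB_{\hat{\cS}_1}\bigr)^{-1}\bvarphi^{\star}.
\end{align*}
Bounding $\lambda_{\max}(\bB_{\hat{\cS}_1}^{\top}\bSigma_{\hat{\cS}_1,\hat{\cS}_1}^{-1}\bB_{\hat{\cS}_1}) \leq \|\bB_{\hat{\cS}_1}\|_{\bbF}^{2}/\lambda_{\min}(\bSigma) \leq Ks_n\|\bB\|_{\max}^{2}/\lambda_{\min}(\bSigma)$ on the event $\{|\hat{\cS}_1| \leq s_n\}$ produces $S^{2} \geq (1 - o_{\bbP}(1))\,n\|\bvarphi^{\star}\|_2^{2}/(1 + Ks_n\|\bB\|_{\max}^{2}/\lambda_{\min}(\bSigma))$, matching the denominator appearing in $\cD(\alpha,\theta)$.

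For the noise, sample splitting gives $\hat{\cS}_1 \perp (\bX^{(2)},\cE^{(2)})$, and by the model $\cE^{(2)} \perp \bX^{(2)}$; hence, conditionally on $(\bX^{(2)},\hat{\cS}_1)$, the rank-$K$ projection $\bM$ is non-random relative to $\cE^{(2)}$. Writing $\bM = \bW\bW^{\top}$ with $\bW^{\top}\bW = \bI_K$, the $K$-dimensional CLT (as in Theorem~\ref{Theorem_Chi_test}) yields $\|\bM\cE^{(2)}\|_{2}^{2}/\sigma^{2} \Rightarrow \chi_K^{2}$, so $\bbP(\|\bM\cE^{(2)}\|_{2} \leq \sigma\chi_{K,1-\theta}) \to 1 - \theta$. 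By the triangle inequality $\sqrt{Q_n^{(2)}} \geq S - \|\bM\cE^{(2)}\|_{2}$, so it suffices to show $S - \hat{\sigma}\chi_{K,1-\alpha} \geq \sigma\chi_{K,1-\theta}$ with probability tending to one. The elementary inequality $(2+\delta)(a^{2}+b^{2}) \geq (1+\delta/2)(a+b)^{2}$ applied to $a = \chi_{K,1-\alpha}$, $b = \chi_{K,1-\theta}$ turns the hypothesis $\bvarphi^{\star} \in \cD(\alpha,\theta)$ into $S \geq \sigma\sqrt{1+\delta/2}\,(\chi_{K,1-\alpha} + \chi_{K,1-\theta})(1-o_{\bbP}(1))$, while Assumption~\ref{Assumption_noise} yields $\hat{\sigma}/\sigma = 1 + o_{\bbP}(1)$; the strict slack $\sqrt{1+\delta/2} > 1$ then closes the argument.

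I expect the main obstacle to lie in the Woodbury step: one needs a sufficiently tight uniform-in-$\hat{\cS}_1$ control of the joint perturbations of $\hat{\bB}^{(2)}_{\hat{\cS}_1}$, $\bH$, and $(\hat{\bU}^{(2)}_{\hat{\cS}_1})^{\top}\hat{\bU}^{(2)}_{\hat{\cS}_1}/(n-m)$ that survives being inverted inside the quadratic form $\bvarphi^{\star\top}(\,\cdot\,)^{-1}\bvarphi^{\star}$, together with a matching additive control of the residual $(\bF^{(2)} - \hat{\bF}^{(2)}\bH^{-\top})\bvarphi^{\star}$, so that all perturbation terms are absorbed into the constant-factor slack provided by $\delta > 0$.
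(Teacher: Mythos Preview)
Your proposal is correct and follows essentially the same route that the paper's proof must take: the precise form of $\cD(\alpha,\theta)$, with the denominator $1+Ks_n\|\bB\|_{\max}^2/\lambda_{\min}(\bSigma)$ and the factor $(2+\delta)(\chi_{K,1-\alpha}^2+\chi_{K,1-\theta}^2)$, is exactly the output of your signal/noise decomposition via the rank-$K$ projector $\bM=\bP_2-\bP_1$, the Woodbury lower bound on $\|\bM\bF^{(2)}\bvarphi^\star\|_2^2$, and the triangle-inequality combination using $(a+b)^2\le 2(a^2+b^2)$. The obstacle you flag---uniform-in-$\hat{\cS}_1$ control of the perturbations of $\hat{\bB}^{(2)}_{\hat{\cS}_1}$, $\bH$, and $(n-m)^{-1}(\hat{\bU}^{(2)}_{\hat{\cS}_1})^\top\hat{\bU}^{(2)}_{\hat{\cS}_1}$---is indeed the main technical work, and it is handled under the screening-size condition $s_n(\log d/n+1/d)\to 0$ together with Proposition~\ref{Lemma_factor_consistency}.
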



\begin{remark}
Dataset with multiple types
are now frequently collected for a common set of experimental subjects. This  new data structure is also called multimodal data. It is worth to mention, the above hypothesis test can be further extended to test the adequacy of multi-modal sparse linear regression model \citep{Li2021}. To be more specific, we consider the hypothesis test as follows:
\begin{align*}
        H_0 : \bY = \sum_{i=1}^{L}\bX_i \bbeta^{\star}_i + \sum_{i=L+1}^{M} \bX_i\bbeta^{\star}_i+ \cE \enspace \mathrm{versus} \enspace H_1 : \bY = \sum_{i=1}^{L}(\bF_i\bgamma_i^{\star}+\bU_i\bbeta_i^{\star})+ \sum_{i=L+1}^{M} \bX_i\bbeta_{i}^{\star}+ \cE.
\end{align*}
We aim at simultaneously testing whether the sparse regression is adequate for the any given $L$ modals. Here $\bX_i\in \mathbb{R}^{n\times d_i}$ is generated from the $i$-th modal, and possesses its own factor structure $\bX_i=\bF_i\bB^\top_i+\bU_i, i\in[M].$ Interested readers are referred to Appendix \textcolor{red}{D.4} for more details.\qed
\end{remark}

\section{Numerical Studies}
\label{simulation}
\subsection{Accuracy of Estimation}\label{simu_est}
For data generation, we let number of factors $K=2$, dimension of covariate $d=1000$, $\bgamma^{\star}=(0.5,0.5)$, the first $s=3$ entries of $\bbeta^{\star}$ be $0.5$ and remaining $d-s$ entries be $0$. Throughout this subsection, we generate every entry of $\bF, \bU$ from the standard Gaussian distribution and let every entry of $\bB$ be generated from the uniform distribution Unif\,$(-1,1)$. We choose the noise distribution of $\varepsilon$ given in model~\eqref{eq_factor_linear} from (i) standard Gaussian, (ii) uniform, and (iii) $t_3$ distribution respectively.

\begin{figure}[ht]
    \centering
    \begin{tabular}{ccc}
   \hskip-30pt \includegraphics[width=0.35\textwidth]{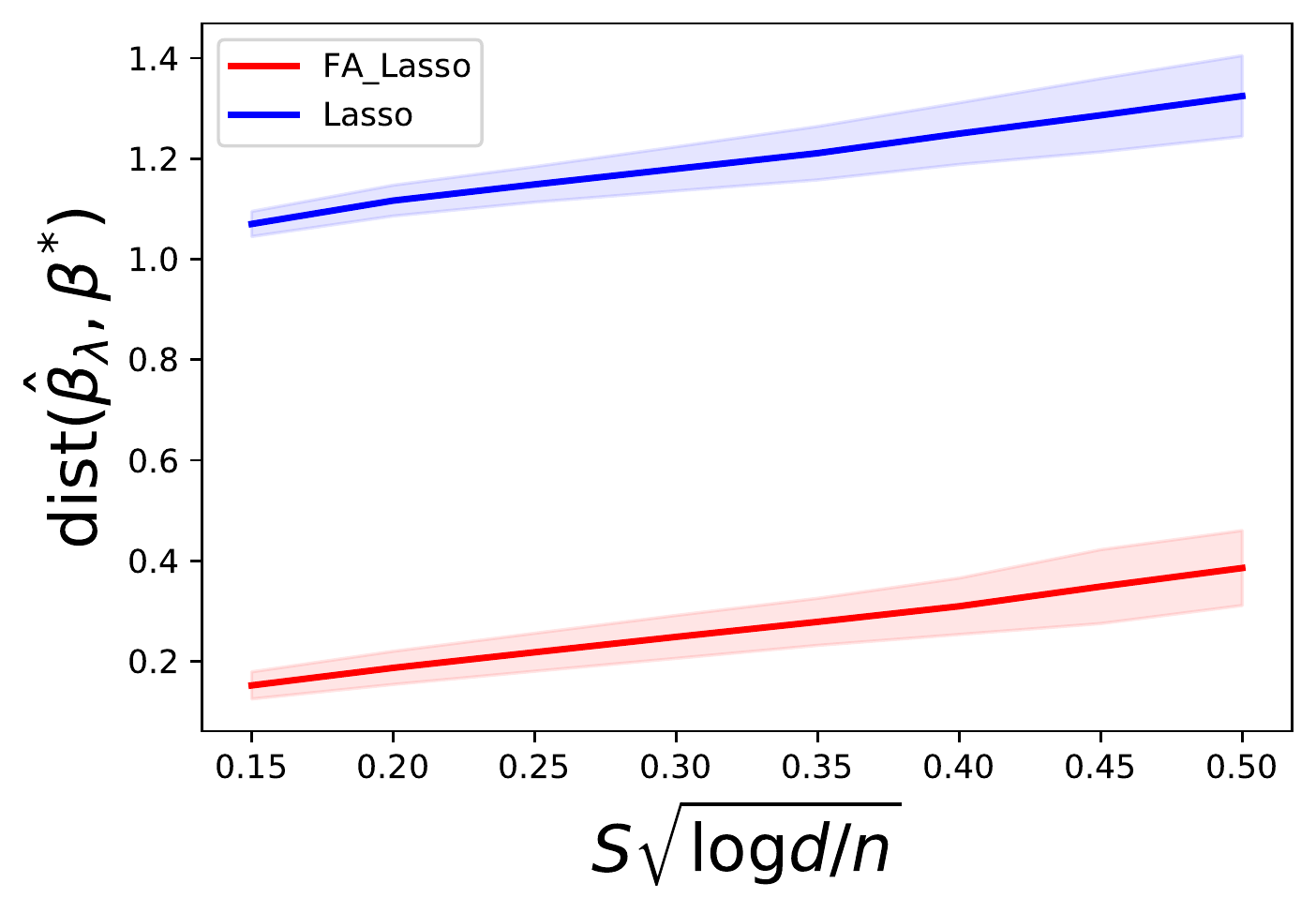}
   &\hskip-5pt \includegraphics[width=0.35\textwidth]{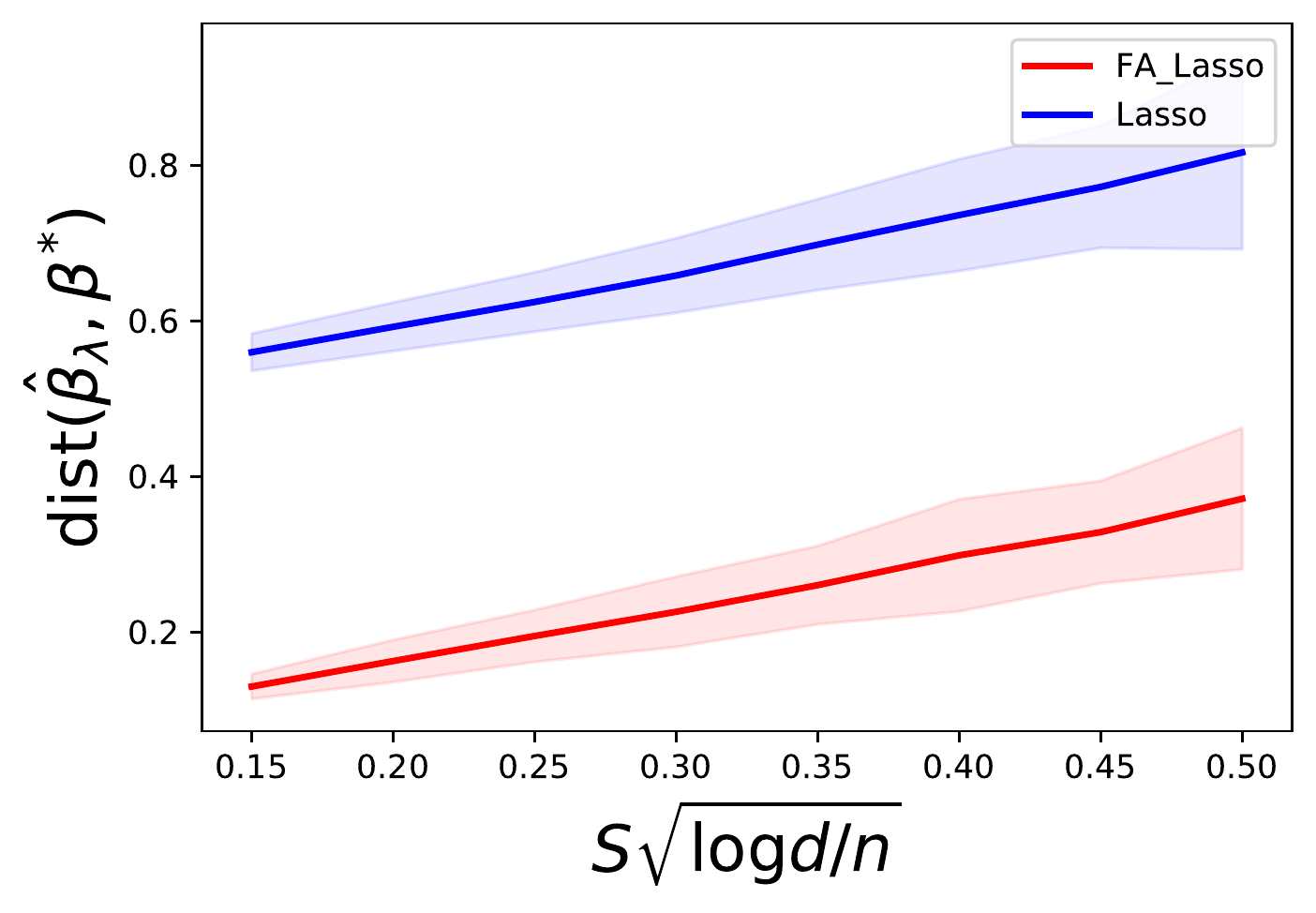}
   &\hskip-5pt \includegraphics[width=0.35\textwidth]{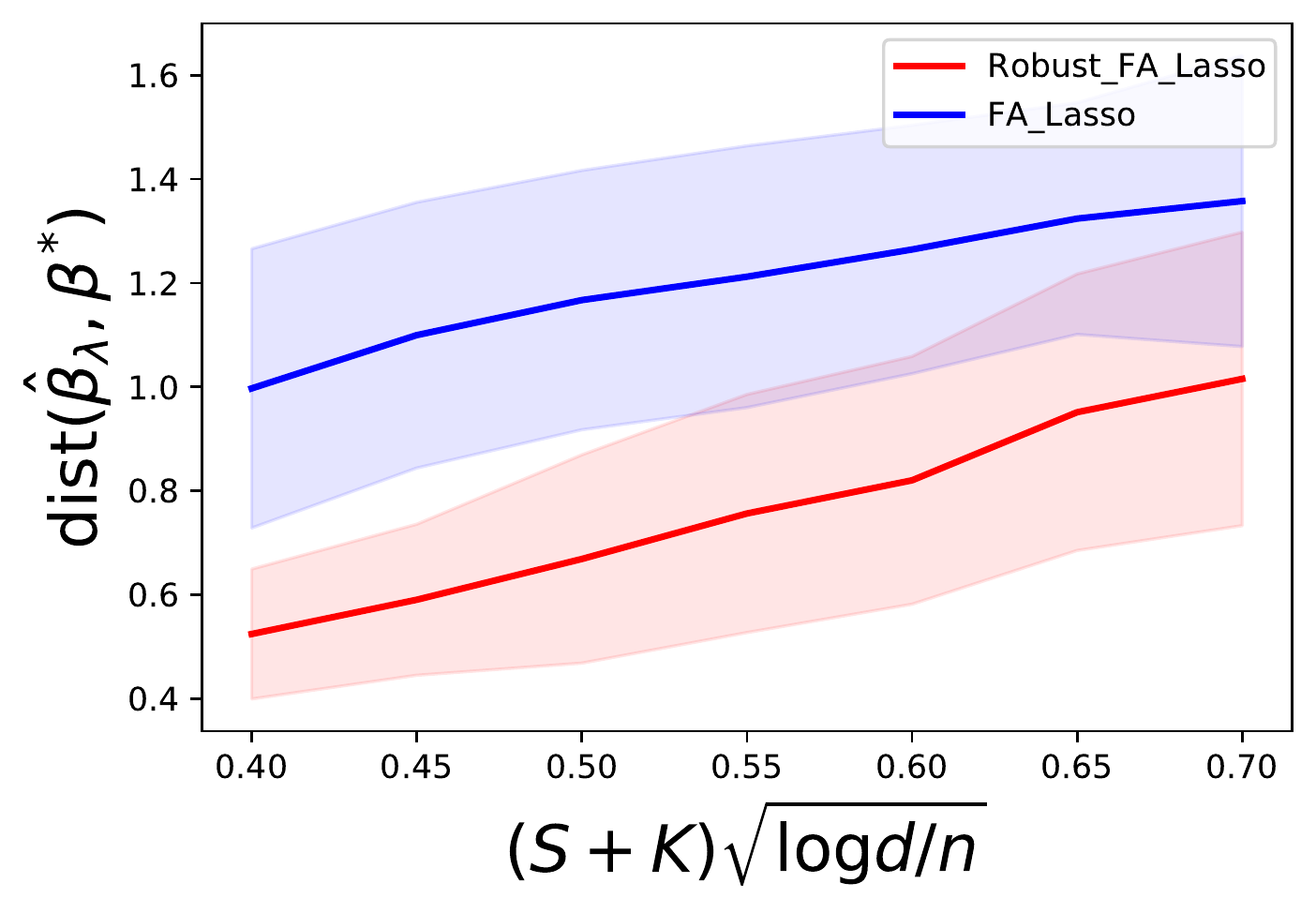}\\
   (a)&(b)&(c)
   \end{tabular}
    \caption{Accuracy for $\hat\beta_{\lambda}$ with dist$(\hat \beta_{\lambda},\beta^{\star}):=
    \|\hat\beta_{\lambda}-\beta^{\star}\|_1$ based on 500 replications. The light color regions indicate the standard errors across the simulation. Figure (a) and (b) depict the estimation results of model \eqref{factor_aug} with noise $\epsilon$ following the standard Gaussian and uniform distributions respectively. In (a) and (b), the red lines denote the estimation results using the method  \eqref{eq_Lasso} (labeled as FA\_Lasso in the figure) and the blue lines represent the results using Lasso with data $(\bX,\bY)$. In Figure (c),  the noise $\epsilon$ follows $t_{3}$-distribution. The red line in (c) represents the result of robust factor adjusted regression (Robust\_FA\_Lasso) via adaptive Huber estimation together with $\ell_1$-penalty given in \eqref{robust_regression} and the blue line represents the result achieved by using FA\_Lasso.}
    \label{fig:l1_light}
\end{figure}

Distributions (i) and (ii) have sub-Gaussian tails. For these two cases, we select sample size $n$ so that $s\sqrt{\log d/n}$ takes uniform grids in $[0.15, 0.5]$ . Then we generate $n$ response variables from model~\eqref{eq_factor_linear} and estimate our parameters via~\eqref{eq_Lasso}.
The results are shown as the red lines in Figure~\ref{fig:l1_light}. They lend further support to our theoretical findings given in section~\ref{est_factor} as the statistical rates there  are upper bounded by $O(s\sqrt{\log d/n})$. Moreover, we also show the estimation results by using Lasso directly on measurements $(\bX, \bY)$. Results are shown as the blue lines given in the first two figures in Figure \ref{fig:l1_light}.  Using Lasso directly on $(\bX,\bY)$ leads to much worse results due in part to the inadequacy of the model. In addition, as shown in \citet{Fan2020factor}, even when the sparse regression model is correct,  we still have better estimation accuracy using factor adjusted regression.

Distribution (iii) has only the bounded second moment. Likewise, we select corresponding number of observations $n$ so that $(s+K)\sqrt{\log d/n}$  takes uniform grids in $[0.4, 0.7]$. The reduced sample sizes help reduce the computation cost on the regularized adaptive Huber estimation using cross-validation to choose the parameter $\omega$.  We compare the results for the robust estimator~\eqref{robust_regression} with that of the factor adjusted regression~\eqref{eq_Lasso}.  The results are shown as the red and blue lines in part (c) of Figure~\ref{fig:l1_light} respectively. They provide stark evidence that  it is necessary to conduct the robust version of factor adjusted regression \eqref{robust_regression}  when noises have heavy tails.

\subsection{Adequacy of Factor Regression}\label{num_fact}


{\bf Data Generation Processes}.  We choose $n=200$,  $K=2$ and $d$ either 200 or 500 and the matrix $\bX=\bF\bB^\top+\bU$ using the following two models with entries of $\bB$ generated from Unif$(-1, 1)$.  
\begin{enumerate}
	\item We generate every row of $\bF\in \mathbb{R}^{n\times K},\bU\in \mathbb{R}^{n\times d}$ from $\textrm{N}(\mathbf{0}, \bI_{K})$ and $\textrm{N}(\mathbf{0}, \bI_{d})$ respectively.
	\item  We let the $t$-th row $\bff_{t}\in \mathbb{R}^{K}$ of $\bF\in\mathbb{R}^{n\times K}$ follow $\bff_t=\bPhi\bff_{t-1}+\bxi_t$ where $\bPhi\in\mathbb{R}^{K\times K}$ with $\bPhi_{i,j}=0.5^{|i-j|+1},i,j\in[K]$. In addition, $\{\bxi_t\}_{t\ge 1}$ are drawn independently from $\textrm{N}(\textbf{0}, \bI_{K})$. We generate every row of $\bU$ from $N(\mathbf{0},\bSigma)$ where $\Sigma_{i,j}=0.6^{|i-j|},i,j\in[d]$.
	\end{enumerate}


The response vector follows $\bY =\bF\bgamma^{\star}+ \bU \bbeta^{\star} + \cE$ in \eqref{Test_sparse_model} with  every entry of $\cE\in \mathbb{R}^{n}$ being generated independently from either from  $N(0,0.5^2)$ or uniform distribution Unif\,$(-\sqrt{3}/2,\sqrt{3}/2)$.
We set $\bgamma^{\star} = (0.5,0.5)$ and $\bbeta ^{\star}= (w,w,w,0,\cdots,0)$, where $w \geq 0$. When $w = 0$, the null hypothesis $\bY = \bF \bgamma^{\star} + \cE$ holds and the simulation results correpond to the size of the test.  Otherwise, they correspond to the power of the test.

\medskip

\noindent{\bf Implementation}.  We summarize the details of the proposed test
\begin{enumerate}\itemsep -0.03in
\item Estimate factor $\hat\bF$,  loading matrix $\hat\bB$, and noise $\hat \bU$ as in section \ref{est_factor}.

\item For the given $\hat\bU,$ we estimate $\hat\bTheta$ by using node-wise regression~\citep{Sara2014}.

\item
Use Lasso method to estimate $\hat\bbeta_{\lambda}$ based on data $(\hat \bU, \tilde{\bY})$ with $\lambda$  chosen by  cross-validation.

\item Construct a de-biased $\tilde{\bbeta}_{\lambda}= \hbbeta_\lambda + n^{-1}\hat{\bTheta} \hat{\bU}^\top (\bY - \hat{\bU} \hbbeta_\lambda)$ .

\item  Obtain  $\hat{L}=1/\sqrt{n}\|\hat\bTheta\hat\bU^\top\tilde{\cE}\|_{\infty}$ with every entry of $\tilde{\cE}$ simulated from $N(0,1)$ and  repeat this computation $1000$ times to obtain the upper-$\alpha$ quantile $\hat c_{1-\alpha}$ of $\hat{L}$.

\item Compute the test result  $\xi_{\alpha}:=\mathbb{I}\{\sqrt{n}\|\tilde{\bbeta}_{\lambda}\|_{\infty}/\hat\sigma>\hat c_{1-\alpha}\}$ where $\hat\sigma$ is estimated by using refitted cross-validation~\citep{Fan2012variance}.
\end{enumerate}


The size and the power are then calculated based on 2000 simulations with  $\alpha = 0.05$ .

\medskip

\noindent{\bf Results}.
For $n=200, K=2$, $d\in \{200,500\}$ and all $w\in \{0,0.05,0.10,0.15,0.20\}$, we generate the data from each model and compute the testing results based on 2000 simulaitons with $\alpha = 0.05$.  The results are depicted in the Table~\ref{tab1}.  The column named Gaussian$(i)$, $i\in \{1,2\}$ represents the simulation results under model $i$ with Gaussian noise.  Similar labels applied to the uniform noise distribution.

\begin{table}[ht]
	\begin{center}
		\def\arraystretch{0.8}
		\setlength\tabcolsep{2pt}
		\begin{tabular}{c|c||c|c|c|c}
			& & Gaussian (1) & Gaussian (2)  &Uniform (1) &Uniform (2)\\
			\hline
			\multirow{5}{*}{$p=200$	}  &$w= 0$ &$0.044$  & $0.047$ &$0.046$ &$0.048$ \\
			&$w=0.05$  &$0.067$ & $0.119$ &$0.065$ &$0.108$\\
			&$w=0.10$  &$0.326$ & $0.714$ &$0.311$ &$0.653$\\
			&$w=0.15$  &$0.859$ & $0.989$&$0.854$ &$0.984$\\
			&$w=0.20$  &$0.998$ & $1.000 $&$0.996$& $1.000$\\
			\hline
			\multirow{5}{*}{$p=500$	}  &$w= 0$ &$0.043$  &$0.040$ &$0.048$ & $0.436$\\
			&$w=0.05$  &$0.067$&$0.080$ &$0.059$&$0.071$\\
			&$w=0.10$  &$0.253$ &$0.632$ &$0.237$&$0.563$\\
			&$w=0.15$  &$0.787$ &$0.974$ &$0.780$&$0.962$\\
			&$w=0.20$  &$0.993$ &$1.000$ &$0.987$&$1.000$\\
		\hline
		\end{tabular}
	\end{center}
	\caption{Simulation results of section \ref{pcr_test} under different regimes.}
	\label{tab1}
\end{table}

Table~\ref{tab1} reveals that our test gives approximately the right size (subject to simulation error; see the rows with $w=0$). This is consistent with our theoretical findings given in section~\ref{pcr_test}. In addition, when $0<w<0.2$, the power of our test increases rapidly to $1$ which reveals the efficiency of our test statistic.

\subsection{Adequacy of Sparse Regression }\label{num_sparse}
This subsection  provides finite-sample validations for the results in section~\ref{sparse_test}.   We take the number of data used for screening $m=\lceil n^{0.8}\rceil$, use Iterative Sure Independence Screening method~\citep{Fan2008sure,saldana2018sis,Zhang19} to select $\hat{\cS}_1$ and apply the refitted cross-validation~\citep{Fan2012variance} to estimate $\sigma^{2}$. The size and the power of the test are computed based on 2000 simulations. 

\medskip

\noindent{\bf Data Generation Processes}.
We let  $n=250$,  $K=3$ and $d$ be either $250$ or $600$.
The noises $\varepsilon$ are i.i.d from   $N(0,0.5^2)$ or Unif\,$(-\sqrt{3}/2,\sqrt{3}/2)$.  The covariate  $\bX\in\mathbb{R}^{n\times d}$ follows the factor model $\bX=\bF\bB^\top+\bU$.  We generate $\bF,$ $\bU$ and $\bB$ in the same way as those in section~\ref{num_fact}.
In addition, the response variable follows $\bY =\bF\bvarphi^{\star}+ \bX \bbeta^{\star} + \cE$ in \eqref{Test_sparse_model} with  $\bbeta^{\star} = (0.8,0.8,0.8,0.8,0,\cdots,0)$ and $\bvarphi^{\star}=v\cdot \mathbf{1}_{K\times1}$ for several different values of $v \geq 0$. The case $v = 0$ corresponds to the null hypothesis and it is designed to test the validity of the size.

\medskip

\noindent{\bf Results}.
For $n=250$, $K=3$, $d\in \{250,600\}$ and $v\in \{0,0.04,0.08,0.12,0.16\}$, we implement the proposed method for every model in section~\ref{num_fact}. The simulation results are depicted in Table~\ref{tab2}. The column named Gaussian (or uniform) $(i)$, $i\in \{1,2\}$ represents the results  under  model $i$ with Gaussian (or uniform) noise mentioned in section \ref{num_fact}.  When $v=0$, the null hypothesis holds, our Type-I error is approximately $0.05$ which matches with the theoretical value.  In addition, when we increase the size of $v$ from $v=0.04$ to $v=0.16$, the power of our test statistic increases sharply to $1$, which reveals its efficiency.

 \begin{table}[ht]
	\begin{center}
		\def\arraystretch{0.8}
		\setlength\tabcolsep{4pt}
		\begin{tabular}{c|c||c|c|c|c}
			 & & Gaussian (1) & Gaussian (2)& Uniform (1) &Uniform (2)  \\
			\hline
			\multirow{6}{*}{$p=250$	}  &$v= 0$ &$0.051$  & $0.054$&$0.056$ &$0.053$\\
			 			&$v=0.04$  &$0.215$ & $0.278$ &$0.233$&$0.286$\\
			 			 &$v=0.08$  &$0.659$ & $0.740$ &$0.655$&$0.750$\\
			 			 &$v=0.12$  &$0.965$ & $0.993$&$0.965$&$0.996$\\
			 			 &$v=0.16$  &$1.000$ & $1.000$&$1.000$&$1.000$\\
						 \hline
			\multirow{6}{*}{$p=600$	}  &$v= 0$ &$0.051$  &$0.052$&$0.050$ &$0.052$\\
&$v=0.04$  &$0.208$&$0.362$ &$0.197$&$0.353$\\
&$v=0.08$  &$0.624$ &$0.802$ &$0.604$&$0.785$\\
&$v=0.12$  &$0.941$ &$0.994$ &$0.934$&$0.999$\\
&$v=0.16$  &$1.000$ &$1.000$ &$0.999$&$1.000$\\ \hline
		\end{tabular}
	\end{center}
\caption{Simulation results of section \ref{sparse_test} under different regimes.}
\label{tab2}
\end{table}

We next discuss the necessity of using sample splitting.  Suppose we do not split samples and use the whole dataset to do sure screening and construct the test statistic. This will result in the high correlation between the selected set $\hat{\cS}$ and covariates when $\hat{\cS}$ is not a consistent estimator of $\cS_{\star}$. In this case, the asymptotic behavior of our test statistic is hard to capture. To demonstrate this point, we simulate the null distribution of the test statistic constructed without using sample splitting and compare it with the asymptotic distribution ($\chi_{K}^2$) via the quantile-quantile plot in  Figure~\ref{fig:qqplot}. Figure~\ref{fig:qqplot} reveals that the test statistic constructed without using sample splitting has heavier right tail than that of the $\chi_{K}^2$ distribution. The sizes of the test are much larger than the results in Table \ref{tab2} when $v=0$.

\begin{figure}[ht]
    \centering
    \includegraphics[width=0.35\textwidth]{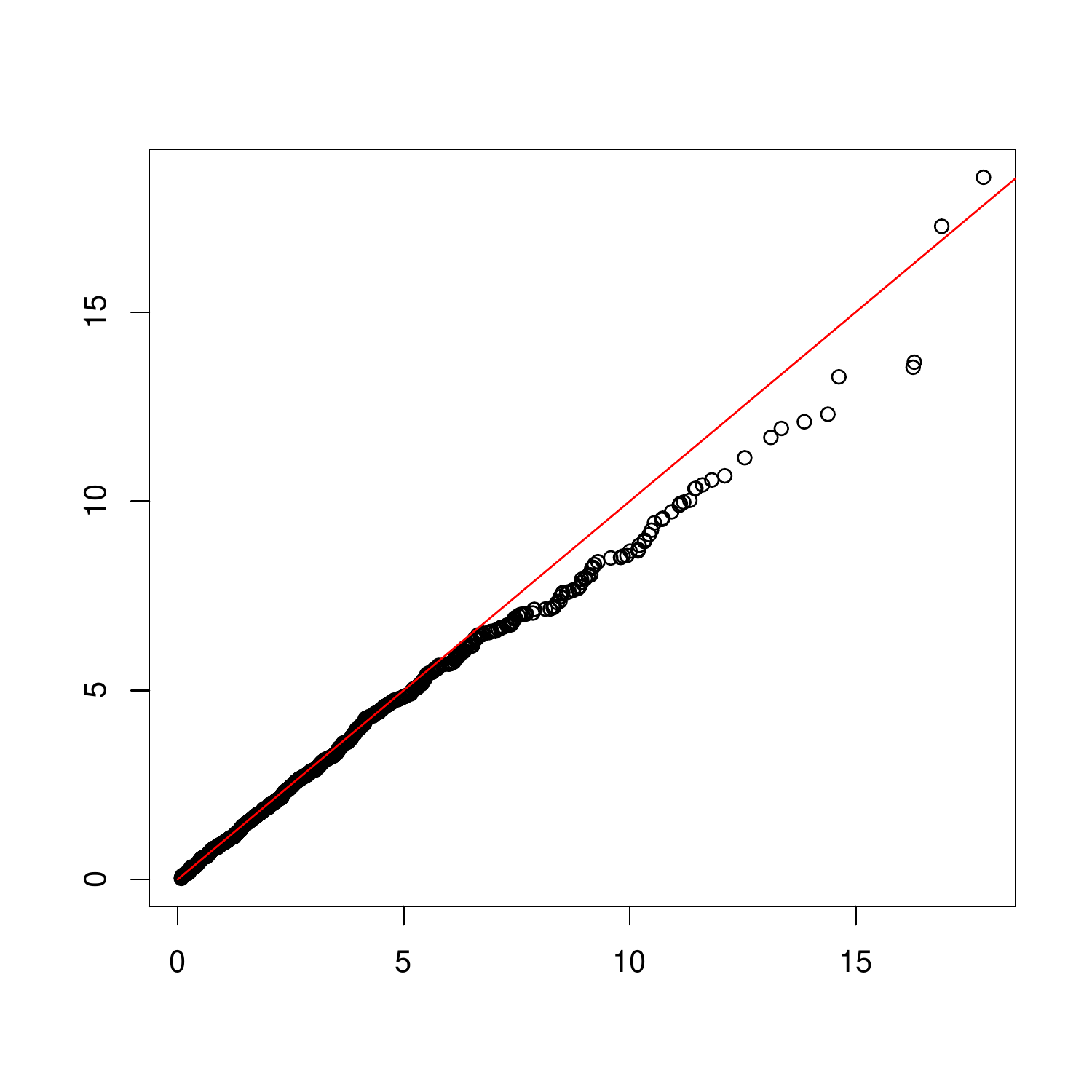}
    \caption{Quantiles of the $\chi_{K}^2$ distribution against those of the test statistic without sample splitting. The x-axis represents the quantiles of the test statistic whereas y-axis is the quantiles of $\chi_{K}^2$ distribution. }
    \label{fig:qqplot}
\end{figure}

\subsection{Empirical Applications}
\label{real_data}
In this section, we use a macroeconomic dataset named FRED-MD~\citep{McCracken2016} to illustrate the performance of our factor augmented regression  model (FARM) and investigate whether the latent factor regression model and sparse linear model are adequate.

There are 134 monthly U.S. macroeconomic variables in this dataset. As they measure certain aspects of economic health, these variables are driven by latent factors and hence correlated. They can be well explained by a few principal components.  In our study, we pick out two variables named 'HOUSTNE' and 'GS5' as our responses respectively and let the remaining variables be the covariates. Here 'HOUSTNE' represents the housing starts in the northeast region and 'GS5' denotes the $5$-year treasury rate.

There exist significant structural breaks for many variables around the year of financial crisis in 2008 which makes our data non-stationary even after performing the suggested transformations. Thus,  we analyze the dataset in two separate time periods independently. Specifically, we study the monthly data collected from February 1992 to October 2007 and from August 2010 to February 2020 respectively after examing the missingness and stationarity of the data.

To illustrate the performance of our proposed FARM against the sparse linear model and latent factor regression model, we first analyze the prediction results achieved by using these models. For every given time period and  model, we perform the prediction by using the moving window approach with window size 90 months. Indexing the panel data from 1 for each of the two time periods, for all $t>90$, we use the 90  previous measurements $\{(\bx_{t-90}, Y_{t-90}), \cdots, (\bx_{t-1}, Y_{t-1})\}$  to train a  model (FARM, sparse linear regression model, or latent factor regression model) and output a prediction $\hat{Y}_t$ as well as the in-sample average $\bar{Y}_t:=\frac{1}{90}\sum_{i=t-90}^{t-1}Y_i$. We measure the prediction accuracy by using out-of-sample $R^2$:
\begin{align*}
R^2=1-\frac{\sum_{t=91}^{T}(Y_t-\hat{Y}_t)^2}{\sum_{t=91}^{T}(Y_t-\bar{Y}_t)^2},
\end{align*}
where $T$ denotes the number of total data points in a given time period.
Table~\ref{tab3} presents the out-of-sample $R^2$ obtained by the aforementioned three models in the two time periods for predicting 'HOUSTNE' and 'GS5'.
Their detailed predictions are depicted in Figure~\ref{pred_house} and Figure~\ref{pred_sp500} respectively. Both out-of-sample $R^2$ and predictions depicted in Figures~\ref{pred_house} and~\ref{pred_sp500} show that FARM outperforms both latent factor regression and sparse linear regression models.

 \begin{table}[ht]
	\begin{center}
		\def\arraystretch{0.8}
		\setlength\tabcolsep{4pt}
		\begin{tabular}{c|c|c|c|c}
Time period	&Data	& FARM & SP\_Linear & LA\_Factor  \\
			\hline
			\hline
						\multirow{2}{*}{02.1992-10.2007}  &HOUSTNE  &$0.632$  & $0.584$ & $0.428$  \\
			&GS5  &$0.705$  &$0.631$ &$0.122$\\
			\hline
								\multirow{2}{*}{08.2010-02.2020}  &HOUSTNE  &$0.694$  &$0.450$ &$0.219$  \\
		&GS5  &$0.657$  &$0.540$ &$0.233$\\
						\hline
						
		\end{tabular}
	\end{center}
	\caption{Out-of-sample $R^2$ for predicting 'HOUSTNE' and 'GS5' data using different models in different time periods. 
LA\_Factor and SP\_Linear stand for latent factor regression and sparse linear regression respectively. } \label{tab3}
\end{table}


We next conduct the hypothesis testing on the adequacy of latent factor regression and sparse linear regression respectively by using  FARM as the alternative model. As computing the bootstrap estimate of the null distribution is expensive for testing the adequacy of the factor model, we only conduct the hypothesis testing using the data in the entire two subperiods:  02.1992-10.2007 and 08.2010-02.2020.  The P-values for the tests are given in Table~\ref{tab4}. Taking the significant level $0.05$, the hypothesis testing results indicate that the latent factor regression is not adequate for all four different settings in Table~\ref{tab4}. As for sparse linear regression, it is accepted only for studying 'HOUSTNE' in the time period 02.1992-10.2007. These results match well with our prediction results.
\begin{table}[ht]
	\begin{center}
		\def\arraystretch{0.8}
		\setlength\tabcolsep{4pt}
		\begin{tabular}{c|c|c|c}
	Time period	&Data	& LA\_factor &SP\_Linear   \\
			\hline
			\hline
			\multirow{2}{*}{02.1992-10.2007}  &HOUSTNE   &$2.00\cdot 10^{-3}$  & $5.75\cdot 10^{-1}$ \\
			&GS5   &$ 2.50\cdot 10^{-3}$  &$8.73\cdot 10^{-3}$\\
			\hline
			\multirow{2}{*}{08.2010-02.2020}  &HOUSTNE   &$<10^{-3}$  &$2.00\cdot 10^{-2}$  \\
		&GS5  &$1.70\cdot 10^{-2}$  &$2.94\cdot 10^{-2}$\\
		\hline
							\end{tabular}
	\end{center}
	\caption{$p$-values for testing the adequacy of the latent factor regression and sparse linear regression models to explain 'HOUSTNE' and 'GS5' data in two different time periods.
The LA\_Factor and SP\_Linear have the same meaning as those in Table~\ref{tab3}.  }
	\label{tab4}
\end{table}

\begin{figure}[]
	\centering
	\begin{tabular}{c}
		\hskip-5pt\includegraphics[width=0.45\textwidth]{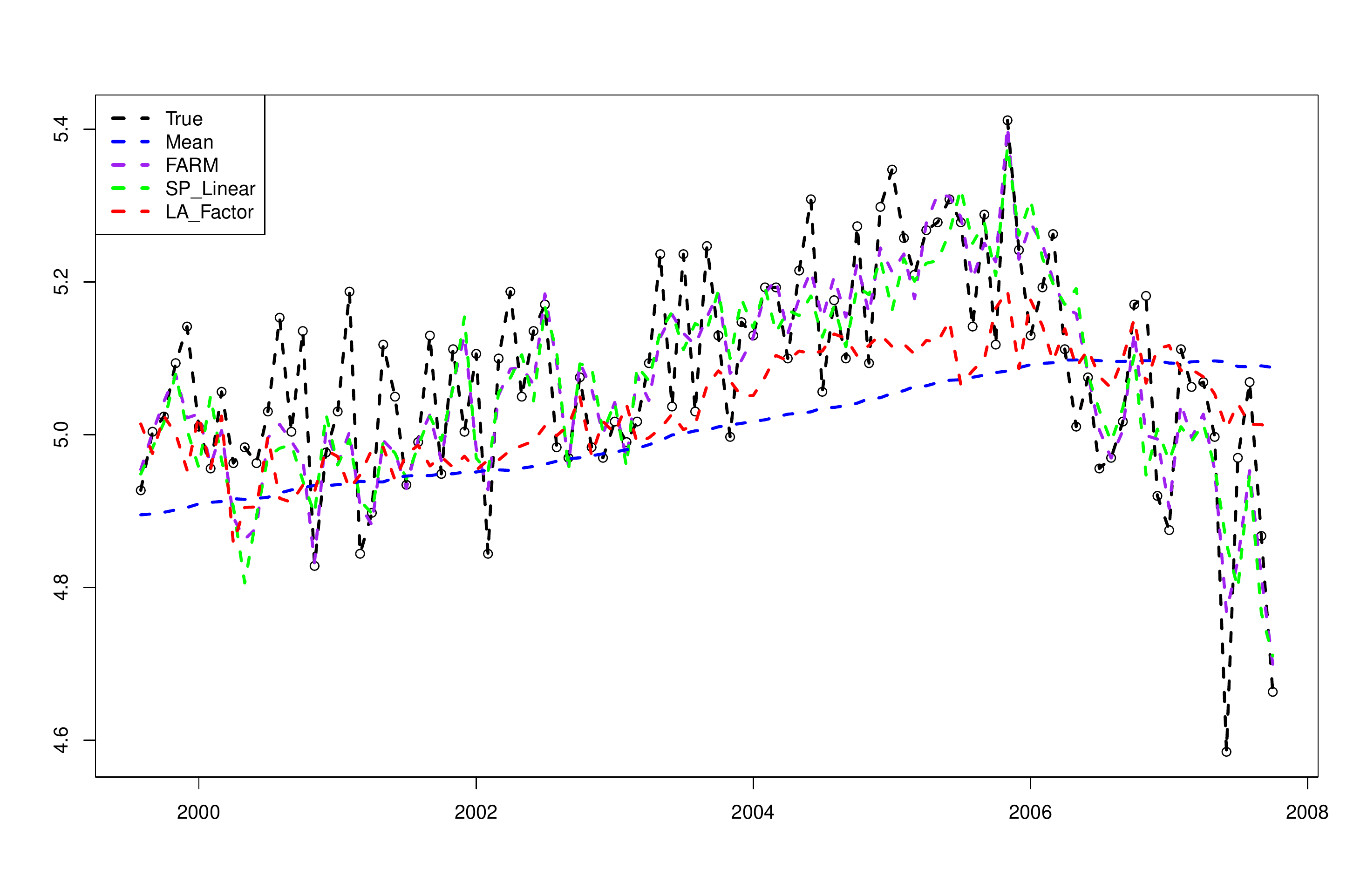}	 \hskip-5pt\includegraphics[width=0.45\textwidth]{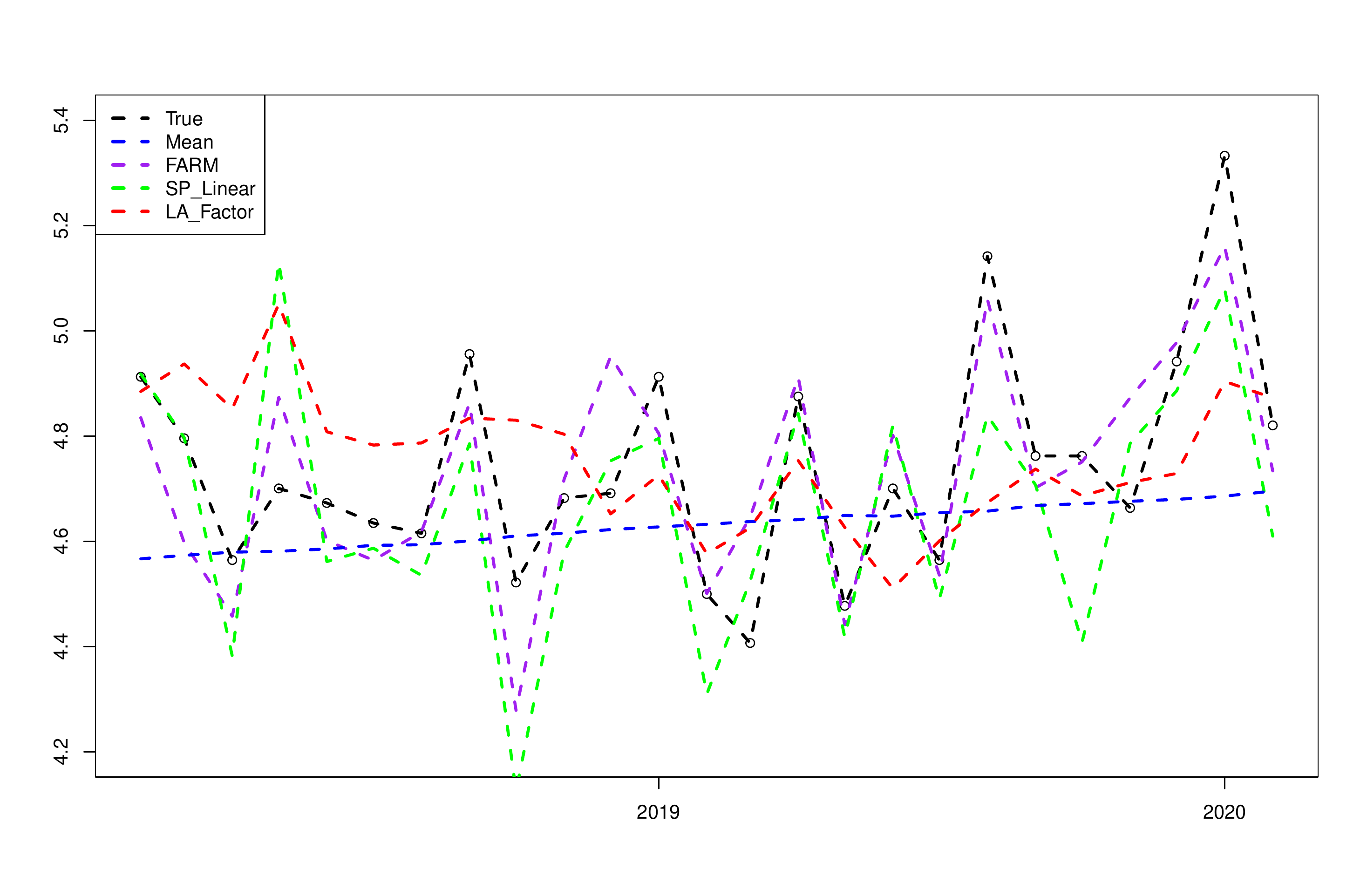}
	\end{tabular}\\
	\caption{Out-of-sample prediction results for 'HOUSTNE' data  in time periods: August 1999 to October 2007 (left panel) and Februrary 2018 to February 2020 (right panel). 
The black dash line represents the true observed values, and the blue, purple, green and red dash lines represent the predictions made by using in-sample mean (moving average with  a window of 90), FARM, sparse linear model  and latent factor regression model, respectively. }
	\label{pred_house}
\end{figure}

\begin{figure}[]
	\centering
	\begin{tabular}{c} \hskip-5pt\includegraphics[width=0.45\textwidth]{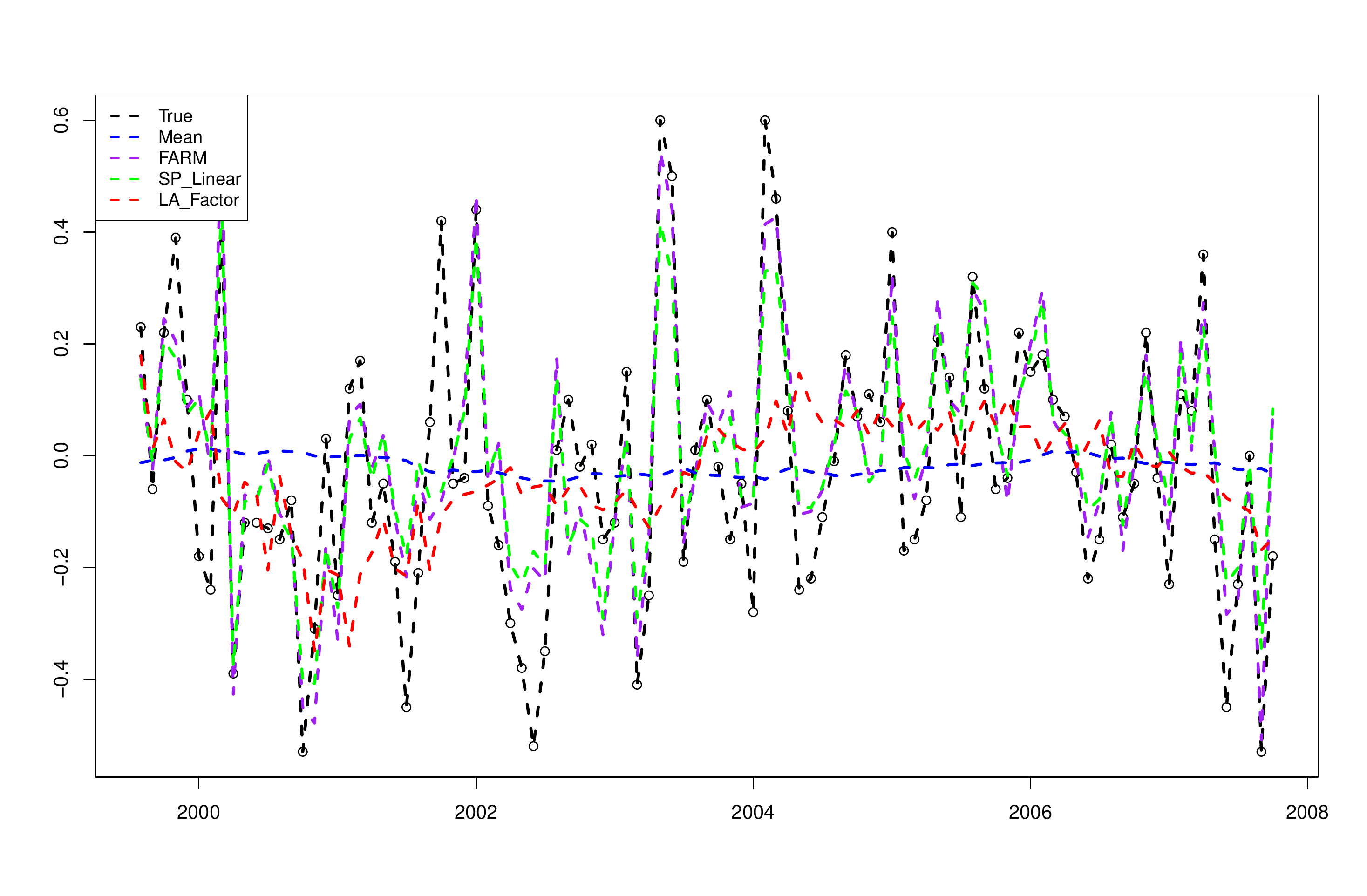}		\hskip-5pt\includegraphics[width=0.45\textwidth]{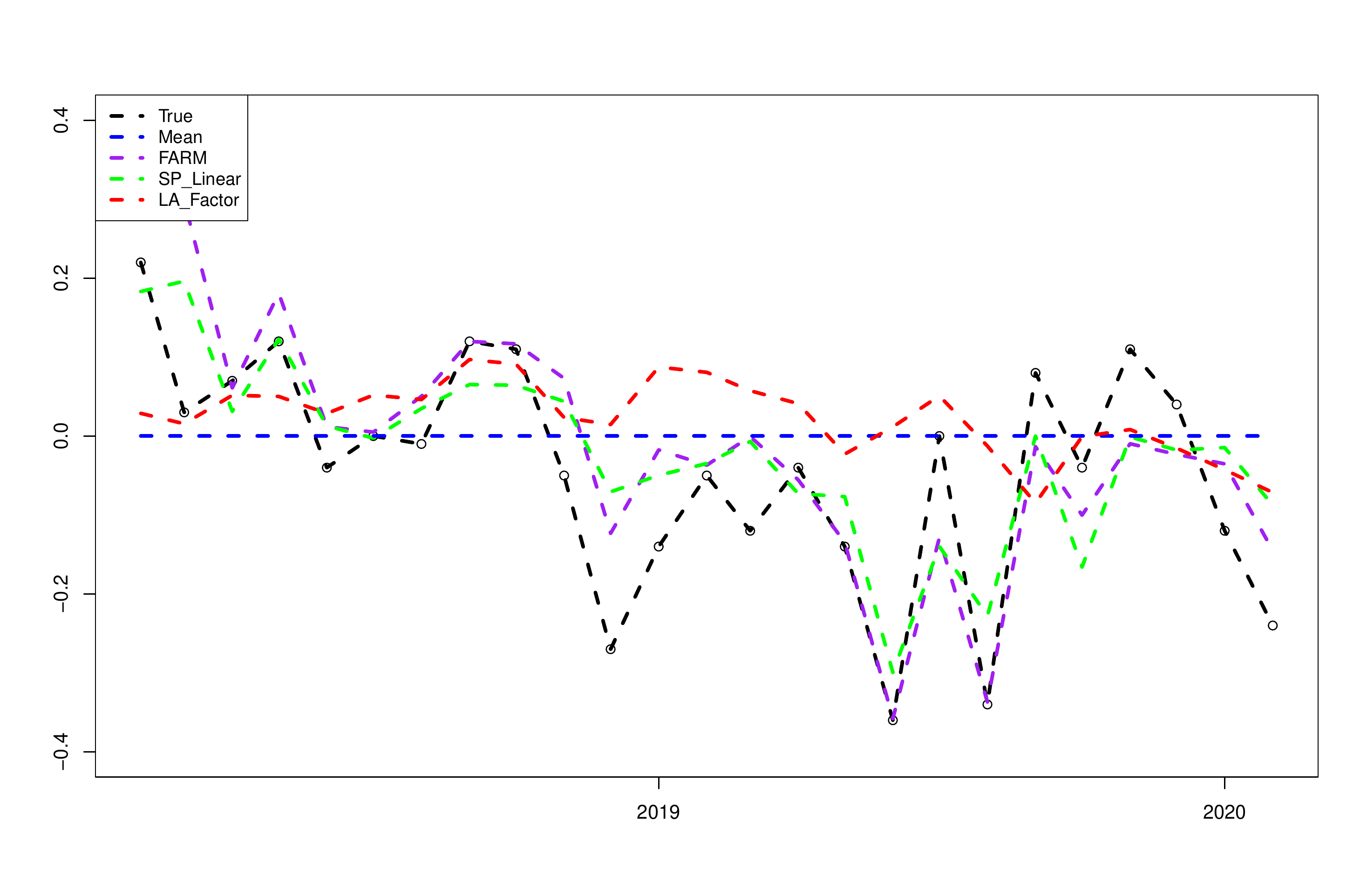}
	\end{tabular}\\
	\caption{Out-of-sample prediction results for 'GS5' data  in time periods: August 1999 to October 2007 (left panel) and Februrary 2018 to February 2020 (right panel). The same captions as  those in Figure \ref{pred_house} are used. }
	\label{pred_sp500}
\end{figure}

\section{Conclusion}
In this paper, we propose a model named Factor Augmented sparse linear Regression Model (FARM), which contains the latent factor regression and the sparse linear regression as our special cases. The model expands the space spanned by covariates into useful directions and hence use additional information beyond the linear space spanned by the predictors. We provide theoretical guarantees for our model estimation under the existence of light-tailed and heavy-tailed noises respectively. In addition, we leverage our model as the alternative one to test the sufficiency of the latent factor regression model and sparse regression model. We believe that the study is among the first of this kind.  The practical performance of our model estimation and our constructed test statistics are proven by extensive simulation studies including both synthetic data and real data. Moreover, it is worth to mention that our model and methodology can be  extended to more general supervised learning problems such as nonparametric regression, quantile regression, regression and classification trees, support vector machines, among others where the factor augmentation idea is always useful. 

\begin{singlespace}
\small
\bibliographystyle{abbrvnat}

\bibliography{reference}

\end{singlespace}
\newpage 

\end{document}